\newtheorem{definition}{Definition}[section]
\newtheorem{theorem}{Theorem}[section]
\newtheorem{lemma}{Lemma}[section]
\newtheorem{corollary}{Corollary}[section]
\newtheorem{remark}{Remark}[section]
\newenvironment{rcases}
  {\left.\begin{aligned}}
  {\end{aligned}\right\rbrace}
\newcommand{\ignore}[1]{}
\let\svthefootnote\thefootnote
\newcommand\freefootnote[1]{%
  \let\thefootnote\relax%
  \footnotetext{#1}%
  \let\thefootnote\svthefootnote%
}
\newcommand{\R}{\mathcal{R}}
\newcommand{\A}{\mathcal{A}}
\newcolumntype{C}{>{\centering\arraybackslash}p{2.5cm}}
\begin{document}

\title{2D Fractional Cascading on Axis-aligned Planar Subdivisions}
\author{Peyman Afshani$^*$}
\author{Pingan Cheng$^*$}
\freefootnote{$^*$Supported by DFF (Det Frie Forskningsr\" ad) of Danish Council for Independent Research under grant ID DFF$-$7014$-$00404.}
\affil{Department of Computer Science, Aarhus University, Denmark}
\affil{\{peyman, pingancheng\}@cs.au.dk}
\renewcommand\Affilfont{\itshape\small}
\maketitle

\medskip

\begin{abstract}
Fractional cascading is one of the influential and important techniques in data structures,
as it provides a general framework for solving a
common important problem: the iterative search problem.
In the problem, the input is a 
graph $G$ with constant degree.
Also as input, we are given a set of values for every vertex of $G$. 
The goal is to preprocess $G$ such that when we are given a query value $q$, and a connected
subgraph $\pi$ of $G$, we can find the predecessor of $q$ in all the sets associated with the
vertices of $\pi$. 
The fundamental result of fractional cascading, by Chazelle and Guibas, is that there exists a
data structure that uses linear space and it can answer queries in
$O(\log n + |\pi|)$ time, at essentially 
constant time per predecessor~\cite{chazelle1986fractionali}.
While this technique has received plenty of attention in the past decades, 
an almost quadratic space lower bound for ``two-dimensional fractional
cascading'' by Chazelle and Liu in STOC 2001~\cite{chazelle2004lower}
has convinced the researchers that fractional cascading is fundamentally a one-dimensional 
technique.

In two-dimensional fractional cascading, the input includes a planar subdivision 
for every vertex of $G$ and the query is a point $q$ and a subgraph $\pi$ and the goal is to 
locate the cell containing $q$ in all the subdivisions associated with the vertices of $\pi$. 
In this paper, we show that it is actually possible to circumvent the lower bound of 
Chazelle and Liu for axis-aligned planar subdivisions.
\ignore{ 
vertex attaching an axis-aligned planar subdivision. We assume the total size
of all subdivisions to be $n$ and we want to build a data structure such that
given any query point $q$ and a connected subgraph $\pi$, all regions of
subdivisions attached to nodes in $\pi$ containing $q$ can be reported
efficiently. Despite its own interest, this problem appears in geographic
information systems as well as many classic computational geometry problems in
high dimensions, e.g., 4D dominance range reporting, 3D point locations on
orthogonal subdivisions, 3D vertical ray-shooting on axis-parallel polyhedra.
}
We present a number of upper and lower bounds which reveal that in two-dimensions, the problem has a much richer structure.
When $G$ is a tree and $\pi$ is a path, then queries can be answered in 
$O(\log{n}+|\pi|+\min\{|\pi|\sqrt{\log{n}},\alpha(n)\sqrt{|\pi|}\log{n}\})$ time using linear space
where $\alpha$ is an inverse  Ackermann function; surprisingly, we show both branches of this
bound are tight, up to the inverse Ackermann factor. 
When $G$ is a general graph or when $\pi$ is a general subgraph, then the query bound
becomes $O(\log n + |\pi|\sqrt{\log n})$ and this bound is once again tight in both cases. 
\end{abstract}

\section{Introduction}
Fractional cascading \cite{chazelle1986fractionali} is one of the widely
used tools in data structures as it provides a general framework for solving a
common important problem: the iterative search problem, i.e., the problem of
finding the predecessor of a single value $q$ in multiple data sets. 
In the problem, we are to preprocess a degree-bounded ``catalog''
graph $G$ where each vertex represents an input set of values from a totally ordered universe $U$;
the input sets of different vertices of $G$ are completely unrelated. 
Then, at the query time, given a value $q \in U$ and a connected subgraph $\pi$ of $G$,
the goal is to find the predecessor of $q$ in the \ignore{lists} sets that correspond to the vertices of $\pi$. 
The fundamental theorem of fractional cascading is that one can build a data structure of linear
size such that the queries can be answered in $O(\log n + |\pi|)$
time\footnote{All logs are base 2 unless otherwise specified.}, essentially giving us
constant search time per predecessor after investing an initial $O(\log n)$ search time~\cite{chazelle1986fractionali}.
Many problems benefit from this technique~\cite{chazelle1986fractionalii}  since they need to
solve the iterative search problem as a base problem. 

Given its importance, it is not surprising that many have
attempted to  generalize this technique:
The first obvious direction is to consider the dynamic version of the problem by allowing
insertions or deletions into the \ignore{lists} sets of the vertices of $G$.
In fact, Chazelle and Guibas themselves consider this~\cite{chazelle1986fractionali} and they show that with 
$O(\log n)$  amortized time per update, one can obtain $O(\log n + |\pi|\log\log n)$ query time.
Later, Mehlhorn and N\" aher improve the update time to $O(\log\log n)$ amortized time~\cite{mn90} and
then Dietz and Raman~\cite{dr91} remove the amortization. 
 There is also some attention given to optimize the dependency of the query time
 on the  maximum degree of graph $G$~\cite{gk09}.
     
The next obvious generalization is to consider the higher dimensional versions of the problem.
Here, each vertex of $G$ is associated with an input subdivision and the goal is to locate a given
query point $q$ on every subdivision associated with the vertices of $\pi$.
Unfortunately, here we run into an immediate roadblock already in two dimensions:
After listing a number of potential applications of two-dimensional fractional cascading, 
Chazelle and Liu \cite{chazelle2004lower} ``dash all such hopes'' by showing an 
$\tilde{\Omega}(n^2)$ space lower bound\footnote{The $\tilde{\Omega}$ notation hides polylogarithmic factors.} 
in the pointer-machine model for any data structure that can answer queries
in $O(\log^{O(1)}n + |\pi|)$ time. 
Note that this lower bound can be generalized to also give a 
$\Omega(n^{2-\varepsilon})$ space lower bound for data structures with 
$O(\log^{O(1)}n) + o( |\pi|\log n)$ query time. 
As far as we can tell, progress in this direction was halted due to this negative result since
the trivial solution already gives the $O(|\pi| \log n)$ query time, by just building
individual point location data structures for each subdivision.

We observe that the lower bound of Chazelle and Liu does not apply to orthogonal subdivisions,
a very important special case of planar point location problem. 
Many geometric problems need to solve this base problem, e.g., 
4D orthogonal dominance range reporting~\cite{afshani2012higher, afshani2014deterministic}, 
3D point location in orthogonal subdivisions~\cite{rahul2014improved}, 
some 3D vertical ray-shooting problems~\cite{de1992two}. 
In geographic information systems, it is very common to overlay planar subdivisions 
describing different features of a region to generate a complete map. 
Performing point location queries on such maps 
corresponds to iterative point locations on a series of subdivisions.

Motivated by this observation, 
we systematically study the generalization of fractional cascading to two dimensions, when restricted to
orthogonal subdivisions. We obtain a number of interesting results, including both upper and lower bounds
which show most of our results are tight except for the general path queries of trees 
where the bound is tight up to a tiny inverse Ackermann factor~\cite{cormen2009introduction}. 

\paragraph{The problem definition}
The formal definition of the problem is as follows.
The input is a degree-bounded connected graph $G=(V, E)$
where each vertex $v\in{V}$ is associated with an axis-aligned planar subdivision. 
Let $n$ be the total number of vertices, edges, and faces in the subdivisions, which
we call the \textit{graph subdivision complexity}. 
We would like to build a data structure such that given
a query $(q, \pi)$, where $q$ is a query point and $\pi$ is a connected
subgraph of $G$,  we can locate $q$ in all
the subdivisions induced by vertices of $\pi$ efficiently. We call this problem
2D Orthogonal Fractional Cascading (2D OFC). 

\subsection{Related Work}
While the negative result of Chazelle and Liu \cite{chazelle2004lower} stops any progress on the general problem
of two-dimensional fractional cascading, there have been other results that can be seen as special cases of 
two-dimensional fractional cascading. 
For example, Chazelle et al. \cite{chazelle1994ray} improved the result of ray
shooting in a simple polygon by a $\log{n}$ factor. In a ``geodesically triangulated''
subdivision of $n$ vertices, they showed it is possible to locate all the
triangles crossed by a ray in $O(\log{n})$ time instead of $O(\log^2{n})$,
which resembles 2D fractional cascading. However, their solution relies heavily
on the characteristic of geodesic triangulation and cannot be generalized to
other problems.
Chazelle's data structure for the rectangle stabbing problem~\cite{c86} can also 
be viewed as a restricted form of two-dimensional fractional cascading where 
$\pi=G$.

In recent years, interestingly, a technique similar to 2D fractional cascading
has been used to improve many classical computational geometry data structures.
While working on the 4D dominance range reporting problem, Afshani et al.~\cite{afshani2012higher}
are implicitly performing iterative point location queries along a path of a balanced
binary tree on somewhat specialized subdivsions in $O(\log^{3/2}n)$ total time.
Later Afshani et al.~\cite{afshani2014deterministic} studied an offline variant of this problem,
and they presented a linear sized data structure that achieves optimal query time.
The same idea is used to improve the result of 3D point location in orthogonal subdivisions. 
In that probelm, Rahul \cite{rahul2014improved} obtained another data structure with 
$O(\log^{3/2}n)$ query time.
\ignore{
Later Afshani et al.~\cite{afshani2014deterministic} studied an offline version of tree point
location problem and gave an optimal $O(n+k\log{n})$ query time and $O(n)$
space data structure, where $k$ is the number of query points and $n$ is the
subdivision size of a tree of height $O(\log{n})$. 
}

Another related problem is  the ``unrestricted'' version of fractional cascading where essentially
$\pi$ can be an arbitrary subgraph of $G$, instead of a connected subgraph. 
In one variant, we are given a set $L$ of categories and a set $S$ of $n$ points in $d$ dimensional space
where each point belongs to one of the categories. 
The query is given by a $d$-dimensional rectangle $r$ and a subset $Q \subset L$ of the categories. 
We are asked
to report the points in $S$ contained in $r$ and belonging to the categories in $Q$. 
In 1D, Chazelle and Guibas \cite{chazelle1986fractionalii} provided a
$O(|Q|\log{\frac{|L|}{|Q|}}+\log{n}+k)$ query time and linear size data
structure, where $k$ is the output size, together with a restricted lower bound. 
Afshani et al.~\cite{afshani2014concurrent} strenghtened the lower bound and presented several
data structures for three-sided queries in two-dimensions. 
Their data structures match the lower bound within an inverse Ackermann factor for the general case.

\subsection{Our Results}
\label{sec:ourresults}
We study 2D OFC in a pointer machine model of computation.
Some of our bounds involve inverse Ackermann functions.
The particular definition that we use is the following.
We define $\alpha_2(n) = \log n$ and then we define
$\alpha_i(n)= \alpha_{i-1}^*(n)$, meaning, it's the number of times
we need to apply the $\alpha_{i-1}(\cdot)$ function to $n$ until we reach a fixed constant. 
$\alpha(n)$ corresponds to the value of $i$ such that 
$\alpha_i(n)$ is at most a fixed constant. 
Our results are summarized in \autoref{tab:results}. 
\ignore{
  $$
  f(i,n)=
  \begin{cases}
  f(i,f(i-1,n))+1 & i>0,\\
  f(0,n)=\log{n} & n>1,\\
  f(i,1)=1.\\
  \end{cases}
  $$

  and

  $$
  \tau(k,n)=
  \begin{cases}
  \min\{i:\log^2n/f^2(i,n)>k\} & \Omega(\log{n}) \leq k \leq o(\log^2{n}),\\
  1 & o.w.\\
  \end{cases}
  $$
} 

{\small
\begin{table}[h!]
\centering
\setlength\extrarowheight{2.5pt}
\caption{Our Results}
\label{tab:results}
\begin{tabular}{ | c | m{2.8cm}| m{1.8cm} | m{6.5cm} | m{2cm} |} 
\hline
\bf{Graph}& \bf{Query} & \bf{Space} & \bf{Query Time} & \bf{Tight?} \\
\hline
Tree  & Path & $O(n\alpha_c(n))$ & $O(\min\{|\pi|\sqrt{\log{n}},c\sqrt{|\pi|}\log{n}\}+\log{n}+|\pi|)$ & Up to $\alpha_c(n)$ factor\\
\hline
Tree  & Path & $O(n)$ & $O(\min\{|\pi|\sqrt{\log{n}},\alpha(n)\sqrt{|\pi|}\log{n}\}$ $+ \log{n} + |\pi|)$ & Up to $\alpha(n)$ factor \\
\hline
Tree  & Subtree & $O(n)$ & $O(\log{n}+|\pi|\sqrt{\log{n}})$ & yes \\
\hline
Graph & Path / Subgraph & $O(n)$ & $O(\log{n}+|\pi|\sqrt{\log{n}})$ & yes \\
\hline
\end{tabular}
\end{table}
}

Our results show some very interesting behavior. 
First, by looking at the last two rows of \autoref{tab:results},
we can see that we can always do better than the na\" ive solution
by a $\sqrt{\log n}$ factor. Furthermore, this is tight. We show matching
query lower bounds both when $G$ can be an arbitrary graph but with $\pi$ being restricted to a 
path and also when $G$ is a tree but $\pi$ is allowed to be any subtree of $G$.
Second, 
when $G$ is a tree and $\pi$ is a path we get some variation depending on the length of the
query path. 
When $\pi$ is of length at most $\frac{\log n}{2}$, then we can answer queries in 
$O(|\pi|\sqrt{\log n})$ time, but when $\pi$ is longer than $\frac{\log n}{2}$, we obtain
the query bound of $O(\sqrt{|\pi|}\log n)$ (ignoring some inverse Ackermann factors).
Furthermore, we give two lower bounds that show both of these branches are tight!
When $\pi$ is very long, longer than $\frac{\log^2 n}{2}$, then the query bound becomes
$O(|\pi|)$ which is also clearly optimal.

\section{Preliminaries}

In this section, we introduce some geometric preliminaries and 
present the tools we will use to build the data structures and to prove the lower bounds.

\subsection{Geometric Preliminaries}
First we review the definition of planar subdivisions.

\begin{definition}
A graph is said to be a planar graph if it can be embedded in the plane without crossings.
A planar subdivision is a planar embedding of a planar graph where all the edges are straight line segments.
The complexity of a planar subdivision is the sum of the number of vertices, edges, and faces of the subdivision.
\end{definition}

Planar point location, defined below, is one classical problem related to planar subdivisions:

\begin{definition}
Given a planar subdivision $S$ of complexity $n$,
in the planar point location problem,
we are asked to preprocess $S$ such that given any query point $q$ in the plane, 
we can find the face $f$ in $S$ containing $q$ efficiently.
\end{definition}

Note that we can assume that the subdivision is enclosed by a bounding box.
There are several different ways to solve the planar point location problem optimally
in $O(\log n)$ query time and $O(n)$ space, see~\cite{toth2017handbook} for details.
One simple solution uses trapezoidal decomposition, see~\cite{bcko08} for a detailed introduction.
Roughly speaking, given a planar subdivision $S$ enclosed by a bounding box $R$,
we construct a trapezoidal decomposition of the subdivision by extending two rays
from every vertex of $S$, one upwards and one downwards.
The rays stop when they hit an edge in $S$ or the boundary of $R$.
The faces of the subdivision we obtain after this transform will be only trapezoids.
\autoref{fig:trape-example} gives an example of trapezoidal decomposition.
A crucial property of trapezoidal decomposition is that it increases the
complexity of the subdivision by only a constant factor.

\begin{figure}[h]
     \centering
     \begin{subfigure}[b]{0.45\textwidth}
         \centering
         \includegraphics[width=0.7\textwidth]{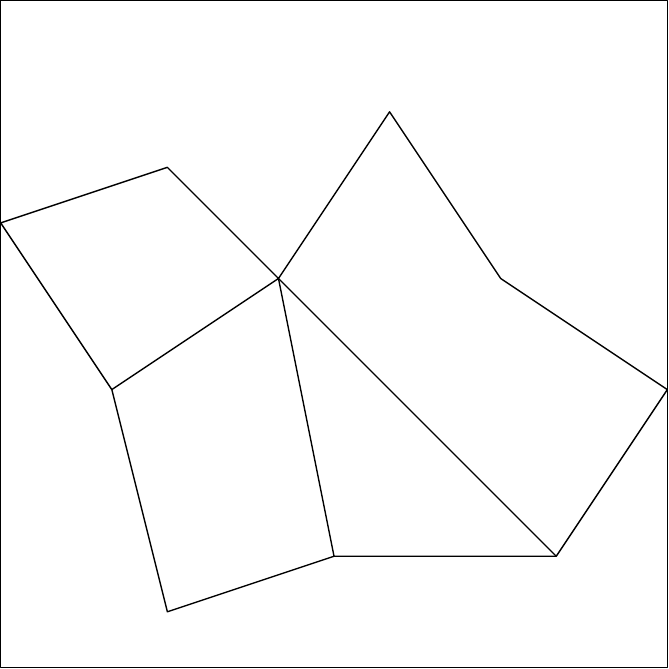}
         \caption{A Planar Subdivision}
         \label{fig:planar-subd}
     \end{subfigure}
     \hfill
     \begin{subfigure}[b]{0.45\textwidth}
         \centering
         \includegraphics[width=0.7\textwidth]{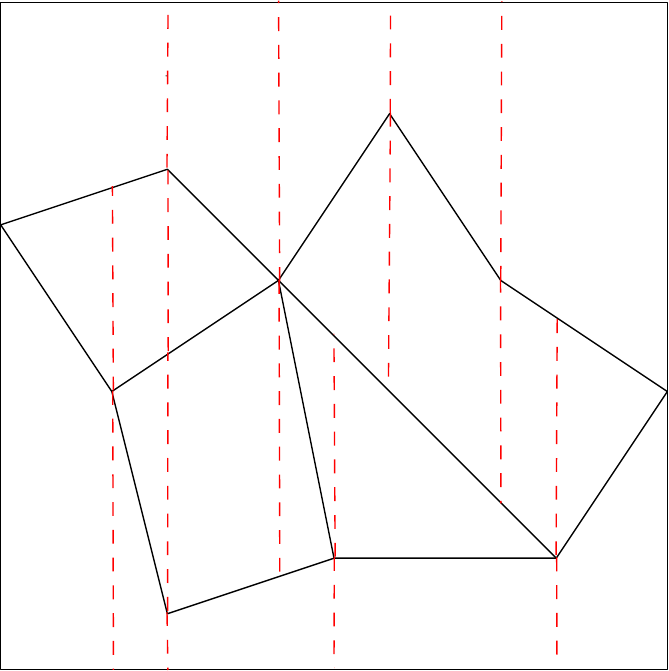}
         \caption{After Trapezoidal Decomposition}
         \label{fig:trape-decomp}
     \end{subfigure}
     \caption{Example of Trapezoidal Decomposition}
     \label{fig:trape-example}
\end{figure}

We also review some concepts related to cuttings.

\begin{definition}
Given a set $H$ of $n$ hyperplanes in the plane, a $(1/r)$-cutting, $1\le r\le n$,
is a set of (possibly open) disjoint simplices that together cover the entire plane
such that each simplex intersects $O(n/r)$ hyperplanes of $H$.
For each simplex in the cutting, the set of all hyperplanes of $H$ intersecting it is called
the conflict list of that simplex.
\end{definition}

$(1/r)$-cuttings are important in computational geometry as 
they enable us to apply the divide-and-conquer paradigm in higher dimensions.
The following theorem by Chazelle~\cite{chazelle1993cutting},
after a series of work in the computational geometry community
~\cite{matouvsek1991cutting,matousek1995approximations,agarwal1990partitioning,agarwal1991geometric,chazelle1990deterministic},
shows the existence of $(1/r)$-cuttings of small size
and an efficient deterministic algorithm computing $(1/r)$-cuttings.

\begin{theorem}[Chazelle~\cite{chazelle1993cutting}]
Given a set $H$ of $n$ hyperplanes in the plane,
there exists a $(1/r)$-cutting, $1\le r\le n$, of size $O(r^2)$, which is optimal.
We can find the cutting and the corresponding conflict lists in $O(nr)$ time.
\end{theorem}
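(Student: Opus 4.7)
The plan is to establish existence of an $O(r^2)$-size $(1/r)$-cutting via random sampling followed by a hierarchical refinement to shave off logarithmic factors, then derandomize the construction using deterministic $\varepsilon$-approximations, and finally verify optimality of the size bound by a direct counting argument on line arrangements. This is the classical Chazelle approach built on the Clarkson-Shor random-sampling framework.

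For the first step, I would sample a subset $R\subseteq H$ uniformly at random with $|R|=\Theta(r\log r)$, form the arrangement $\mathcal{A}(R)$, and take its bottom-vertex triangulation. The $\varepsilon$-net theorem with $\varepsilon=1/r$ guarantees, with constant probability, that every triangle is crossed by at most $n/r$ hyperplanes of $H$. Since $|\mathcal{A}(R)|=O(|R|^2)=O(r^2\log^2 r)$, this already yields a valid $(1/r)$-cutting of size $O(r^2\log^2 r)$---nearly optimal, but losing polylogarithmic factors in both size and construction time.

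To remove the $\log^2 r$ factor I would apply a two-level refinement (hierarchical cutting). First build a coarse $(1/r')$-cutting $\Xi'$ with $r'=r/\log r$, whose size is $O((r')^2\log^2 r')=O(r^2)$ by the previous step; each $\Delta\in\Xi'$ conflicts with only $O(n\log r/r)$ hyperplanes. Inside each such $\Delta$, build a $(1/\log r)$-cutting of its conflict list, refining $\Delta$ into pieces each crossed by only $O(n/r)$ hyperplanes of $H$. The Clarkson-Shor bound applied to the sum over simplices shows that the total number of refined simplices telescopes to $O(r^2)$, completing the existence proof.

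For the algorithmic part, I would replace each random sample by a deterministic $\varepsilon$-approximation computed via the Matou\v{s}ek-style merge-and-halve scheme, which runs in linear time in the input size. Careful scheduling of the hierarchical levels keeps the total work, including reporting all conflict lists, at $O(nr)$; this is the main obstacle, since a naive derandomization easily introduces polylogarithmic factors and the sharp $O(nr)$ bound requires interleaving the approximation construction with the cutting hierarchy. The matching lower bound is immediate: $n$ lines in general position produce $\Theta(n^2)$ arrangement vertices, but a single simplex of a $(1/r)$-cutting intersects only $O(n/r)$ lines and therefore contains at most $O((n/r)^2)$ such vertices, forcing any cutting to use $\Omega(r^2)$ simplices.
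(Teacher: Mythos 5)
First, a point of orientation: the paper does not prove this statement at all --- it is imported verbatim as a known result of Chazelle (cited as \cite{chazelle1993cutting}) and used as a black box, so there is no in-paper proof to compare against. Your sketch therefore has to stand on its own as a reconstruction of the classical cutting theorem, and as written it has a genuine gap in the step that removes the logarithmic factors.

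The problem is the two-level refinement. With a coarse $(1/r')$-cutting, $r'=r/\log r$, built by the $\varepsilon$-net argument, you already have $O\bigl((r')^2\log^2 r'\bigr)=O(r^2)$ coarse cells, and you then refine \emph{every} coarse cell by a $(1/\log r)$-cutting of its conflict list, which by the same $\varepsilon$-net argument has $\Omega(\log^2 r)$ cells per coarse cell in the worst case. The total is therefore $\Theta(r^2\log^2 r)$ (up to $\log\log$ factors), not $O(r^2)$; no choice of the split $r=r'\cdot t$ fixes this, since uniform two-level refinement always multiplies the two sizes and retains a $\log^2$ loss. The Clarkson--Shor bound does not rescue the scheme you describe, because you invoke it for a uniform refinement rather than where it actually applies: the correct log-removal (Chazelle--Friedman, Matou\v{s}ek) samples only $\Theta(r)$ hyperplanes, and then refines each cell $\Delta$ \emph{according to its excess} $t_\Delta=|H_\Delta| r/n$ into $O(t_\Delta^2\log^2 t_\Delta)$ subcells, using the Clarkson--Shor moment bound $\mathbb{E}\bigl[\sum_\Delta t_\Delta^{c}\bigr]=O(r^2)$ (exponential decay of the excess distribution) to conclude the total is $O(r^2)$; Chazelle's own construction instead builds a hierarchy with a \emph{constant} refinement ratio per level, which is also what makes the deterministic $O(nr)$ time achievable. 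Relatedly, your algorithmic step is asserted rather than argued --- ``careful scheduling keeps the total work at $O(nr)$'' is precisely the hard part of Chazelle's paper and cannot be waved through, since per-cell $\varepsilon$-approximation computations naively cost an extra polylogarithmic factor. The final lower-bound paragraph (counting the $\Theta(n^2)$ arrangement vertices against $O((n/r)^2)$ vertices per cell) is correct.
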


In this paper, we will use intersection sensitive $(1/r)$-cuttings
which is a generalization of $(1/r)$-cuttings.
The following theorem is given by de Berg and Schwarzkopf~\cite{berg1995cuttings}.

\begin{theorem}[de Berg and Schwarzkopf~\cite{berg1995cuttings}]
\label{thm:intersection-sensitive-cutting}
Given a set $H$ of $n$ line segments in the plane with $A$ intersections,
we can construct a $(1/r)$-cutting, $1\le r\le n$, of size $O(r+Ar^2/n^2)$.
 We can find the cutting and the corresponding conflict lists in time $O(n\log r + Ar/n)$
 using a randomized algorithm.
\end{theorem}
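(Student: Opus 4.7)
The plan is to follow the Clarkson--Shor random-sampling framework, adapted to segments so that the analysis is sensitive to the number of intersections $A$. First I would pick a random sample $\R\subseteq H$ by including each segment independently with probability $p=r/n$. By a Chernoff bound $|\R|=\Theta(r)$ with high probability, and since each of the $A$ intersection points of $H$ survives in $\R$ with probability $p^2$, linearity of expectation gives $\mathbb{E}[A_\R]=O(Ar^2/n^2)$ for the number of intersections among the sampled segments. The vertical (trapezoidal) decomposition of the arrangement $\A(\R)$ therefore partitions the plane into $O(|\R|+A_\R)=O(r+Ar^2/n^2)$ trapezoids, which already matches the claimed cutting size.

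Second, I would bound the conflict lists. Each trapezoid of the decomposition is determined by at most four segments of $\R$ (its top, bottom, and the two segments whose endpoints generate its vertical sides), so trapezoids form a configuration space of constant ``defining'' size. By the Clarkson--Shor exponential-decay / $\varepsilon$-net bound applied to this space, every trapezoid has conflict list of size $O((n/r)\log r)$ with high probability. Since a genuine $(1/r)$-cutting requires conflict lists of size at most $n/r$, I would refine each overflowing trapezoid $\tau$ by recursively applying the construction to $H_\tau$ with a local parameter chosen so the sub-trapezoids have at most $n/r$ conflicts. The standard moment-bound calculation then shows that the total number of refined trapezoids is dominated by $O(r+Ar^2/n^2)$, because the expected number of trapezoids whose conflict list has an overflow factor of $2^j$ decays geometrically in $j$.

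For the running time, building $\A(\R)$ together with its vertical decomposition takes $O(|\R|\log|\R|+A_\R)=O(r\log r + Ar^2/n^2)$ time by a Bentley--Ottmann sweep. Distributing each segment of $H$ into the conflict lists of the trapezoids it crosses is done by a point-location walk in the decomposition; the total walking cost is, by another application of the Clarkson--Shor moment theorem, $O(n\log r + Ar/n)$, which also dominates the cost of the subsequent refinement step since each level contributes a geometrically smaller amount.

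The main obstacle I expect is the refinement analysis: I must verify that sampling inside each overflowing trapezoid, combined with the exponential-decay estimate on the number of trapezoids with a given overflow factor, does not inflate either the final size or the construction time beyond the target bounds. Compared with Chazelle's hyperplane cutting theorem cited earlier, the intersection-sensitive twist is that the ``complexity'' of the random arrangement is no longer a function of $|\R|$ alone but of both $|\R|$ and $A_\R$, so the moment analysis must simultaneously track segment survival and intersection survival, and one must check that the recursion terminates after $O(\log\log r)$ levels with the two-parameter bound $O(r+Ar^2/n^2)$ preserved.
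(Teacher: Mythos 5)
The paper itself does not prove this theorem---it is imported verbatim from de Berg and Schwarzkopf~\cite{berg1995cuttings}---and your sketch follows essentially the same route as that source: a Clarkson--Shor random sample, its vertical decomposition of expected size $O(r + Ar^2/n^2)$, and a refinement of overflowing trapezoids controlled by the exponential-decay (moment) bound, which is indeed how the intersection-sensitive bound is obtained. Your outline is sound up to two cosmetic points: a Bentley--Ottmann sweep costs $O((r + A_{\mathcal{R}})\log r)$ rather than $O(r\log r + A_{\mathcal{R}})$, so one should build the trapezoidal decomposition by randomized incremental construction to stay within the claimed time, and a single refinement stage (worst-case cuttings inside heavy cells) already suffices---no $O(\log\log r)$-level recursion is needed, since the geometric decay of cells with overflow factor $2^j$ absorbs the $O(t^2)$-size local cuttings.
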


Note that by the construction of generalized cuttings, see~\cite{berg1995cuttings} for detail,
the following corollary follows directly from \autoref{thm:intersection-sensitive-cutting},

\begin{corollary}
\label{cor:orth-sen-cutting}
Given an axis-aligned planar subdivision of complexity $n$,
we can construct a $(1/r)$-cutting, $1\le r\le n$, of size $O(r)$.
More specifically, each cell of the cutting is an axis-aligned rectangle
and the size of the conflict list of every cell is bounded by $O(n/r)$.
We can find the cutting and the corresponding conflict lists in time $O(n\log n)$
using a randomized algorithm.
\end{corollary}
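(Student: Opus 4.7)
The plan is to derive the corollary as a direct specialization of \autoref{thm:intersection-sensitive-cutting} to the axis-aligned setting, exploiting the fact that a planar subdivision by definition has no crossing edges.

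First I would view the edges of the axis-aligned subdivision as a set of $n$ axis-aligned line segments and observe that no two of these segments cross in their interiors; any shared point must be a vertex of the subdivision. In particular, the intersection count in the sense of \autoref{thm:intersection-sensitive-cutting} satisfies $A = O(n)$, since by Euler's formula the total number of vertex--edge incidences in a planar subdivision of complexity $n$ is $O(n)$. Substituting $A = O(n)$, the cutting size becomes $O(r + Ar^2/n^2) = O(r + r^2/n) = O(r)$ for $r \le n$, the conflict list of each cell has size $O(n/r)$ by the definition of a $(1/r)$-cutting, and the construction time is $O(n \log r + Ar/n) = O(n \log r + r) = O(n \log n)$. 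This already yields every numerical bound stated in the corollary.

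The remaining point is that \autoref{thm:intersection-sensitive-cutting} is formulated with simplicial cells, whereas the corollary asserts that each cell is an axis-aligned rectangle. To settle this, I would revisit the de Berg--Schwarzkopf construction, which builds its cutting on top of the vertical decomposition of a random sample $R$ of the input segments. When the segments are axis-aligned and non-crossing, the arrangement of $R$ is itself an axis-aligned planar subdivision, and its vertical decomposition is naturally a refinement into axis-aligned rectangles rather than triangles; carrying the construction out with this specialization preserves the entire framework. The main obstacle I anticipate is therefore not any new combinatorial bound but the bookkeeping of confirming that the Clarkson--Shor style analysis underlying \autoref{thm:intersection-sensitive-cutting} continues to give the intersection-sensitive conflict-list bound when cells are defined by at most four axis-aligned segments (top, bottom, left, right) rather than three lines bounding a simplex; since each rectangular cell still has constant descriptive complexity, the same argument goes through with only notational changes.
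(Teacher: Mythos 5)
Your proposal is correct and follows essentially the same route as the paper, which simply observes that the corollary ``follows directly'' from \autoref{thm:intersection-sensitive-cutting} together with the structure of the de Berg--Schwarzkopf construction (vertical decomposition on axis-aligned, non-crossing segments yielding rectangular cells). Your substitution $A=O(n)$ giving size $O(r)$, conflict lists $O(n/r)$, and time $O(n\log n)$, plus the remark about rectangular cells, is exactly the argument the paper leaves implicit.
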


\subsection{Rectangle Stabbing}

In $d$-dimensional rectangle stabbing problem, we are given a set of $n$
$d$-dimensional axis-parallel rectangles, our task is to build a data structure
such that given a query point $q$, we can report the rectangles containing the
query point efficiently. 
As noted earlier, Chazelle~\cite{c86} provides an optimal solution in two-dimensions, a linear-sized
data structure that can answer queries in $O(\log n + t)$ time where $t$ is the output size.
The following lemma by Afshani et al.~\cite{afshani2012higher} establishes an upper bound of this problem
and it is obtained by a basic application of range trees~\cite{Bentley.79} with large fan-out and Chazelle's data structure. 

\begin{lemma}[Afshani et al. \cite{afshani2012higher}]
\label{lem:stabbing}
We can answer $d$ dimensional rectangle stabbing queries in time $O(\log{n}\cdot(\log{n}/\log{H})^{d-2}+t)$
using space $O(nH\log^{d-2}n)$, where $n$ is the number of rectangles, $t$ is the output size, and $H\geq{2}$ is any parameter.
\end{lemma}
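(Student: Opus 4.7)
The plan is to argue by induction on the dimension $d$, using Chazelle's optimal 2D rectangle stabbing structure as the base case and an $H$-ary range tree (segment-tree style) to peel off one dimension at a time. The base case $d=2$ is immediate: Chazelle's data structure~\cite{c86} uses $O(n)$ space and answers queries in $O(\log n + t)$ time, which matches the claimed bound for $d=2$.

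For the inductive step $d \ge 3$, I would build an $H$-ary segment tree on the endpoints of the $d$-th coordinate of the input rectangles. Each internal node $v$ corresponds to a slab along the $d$-th axis; I attach to $v$ the canonical subset of rectangles whose $d$-th interval covers $v$'s slab but not the parent's, represented by their $(d-1)$-dim projections, and recursively build a $(d-1)$-dim stabbing structure on this canonical subset using the same parameter $H$. To answer a query at a point $q$, I descend the outer tree along the root-to-leaf path determined by $q_d$, querying the attached $(d-1)$-dim structure on $(q_1,\ldots,q_{d-1})$ at each of the $O(\log_H n)$ nodes encountered. Correctness holds because the $d$-th interval of every rectangle containing $q$ is canonically decomposed by exactly one node on this path, and the $(d-1)$-dim query at that node reports exactly those rectangles whose remaining projection contains $q$. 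Applying the inductive query bound $O(\log n \cdot (\log n/\log H)^{d-3} + t_v)$ at each visited node and summing gives the target query time $O(\log n \cdot (\log n/\log H)^{d-2} + t)$.

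The space bound is the delicate part, since each rectangle contributes to $O(H\log_H n)$ canonical subsets per level of the outer segment tree, and there are $d-2$ such recursive levels, so naively the total space blows up to $O(n(H\log_H n)^{d-2})$. To match the claimed $O(nH\log^{d-2} n)$ bound, I would scale the per-level fan-out to $H^{1/(d-2)}$ at each of the $d-2$ levels (treating $d$ as a small constant). With this choice the cumulative fan-out contributes exactly a single factor of $H$ to the space, while $\log_{H^{1/(d-2)}} n = \Theta(\log n/\log H)$ keeps the per-level query descent at the required $\Theta(\log n/\log H)$.

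The main obstacle I anticipate is precisely this space-versus-query balancing: Chazelle's 2D structure is already optimal, so every improvement in the query time must come from the $d-2$ outer levels, and the query bound's $(\log n/\log H)^{d-2}$ factor forces large fan-out at \emph{every} level, while the space bound's \emph{linear} dependence on $H$ forbids paying a fresh factor of $H$ at every level. Once the per-level fan-out is chosen as above, the induction proceeds routinely by plugging the inductive space and query bounds into the recurrence.
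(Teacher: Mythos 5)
Your proposal is correct and matches the approach the paper indicates: the paper does not prove this lemma but cites Afshani et al.\ and describes it as a basic application of range trees with large fan-out combined with Chazelle's 2D stabbing structure, which is exactly your construction. Your extra observation that the fan-out must be rebalanced to $H^{1/(d-2)}$ per level (rather than $H$ at every level) to keep the space at $O(nH\log^{d-2}n)$ while preserving the $(\log n/\log H)^{d-2}$ query factor is the right way to fill in the detail the citation leaves implicit.
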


\subsection{A Pointer Machine Lower Bound Framework}
We will use the pointer machine lower bound framework of Afshani~\cite{afshani2012improved}.
The framework deals with an abstract ``geometric stabbing problem'' which is defined by a set $\R$ of
``ranges'' and a set $U$ of queries. 
An instance of the geometric stabbing problem is given by a set $R \subset \R$ of $n$ ``ranges''
and the goal is to preprocess $R$ to answer queries $q$. 
Given $R$, an element $q\in U$ (implicitly) defines a subset $R_q\subset R$ and the 
data structure is to output the elements of $R_q$. 
However, the data structure is restricted to operate in the (strengthened) pointer machine model of computation where
the memory is a directed graph $M$ consisting  of ``cells'' 
where each cell can store an element of $R$ as well as two pointers to other memory cells.
At the query time, the algorithm must find a connected subgraph $M_q$ of $M$ where each element of $R_q$
is stored in at least one memory cell of $M_q$.
The size of $M$ is a lower bound on the space complexity of the data structure and the size of $R_q$ is a lower bound
on the query time. 
However, the lower bound model allows for unlimited computation and allows the data structure to have complete information
about the problem instance; the only bottleneck is being able to navigate to the cells storing the output elements.
In addition, the framework assumes that we have a measure $\mu$
such that $\mu(U) = 1$. 
We need a slightly more precise version of the lower bound framework where the dependency on a certain ``constant'' is
made explicit. 

\begin{restatable}{theorem}{thmframework}\label{thm:framework}
  Assume, we have an algorithm that given any input instance $R \subset \R$ of $n$ ranges, it can store $R$ in 
  a data structure of size $S(n)$ such that given any query $q \in U$,  it can answer the query 
  in $Q(n) + \gamma |R_q|$ time.

  Then, suppose we can construct an input set
  $R \subset \R$ of $n$ ranges such that the following two conditions are satisfied: 
  (i) every query point $q\in U$ is contained in exactly $|R_q|=t$ ranges
  and $\gamma t \ge Q(n)$; (ii) there exists a value $v$ such that
  for any two ranges $r_1, r_2 \in R$, $\mu(\left\{ q \in U | r_1, r_2 \in R_q \right\})$ is
  well-defined and is upper bounded by $v$.
  Then, we must have $S(n)=\Omega(tv^{-1}/2^{O(\gamma)})=\Omega(Q(n)v^{-1}/2^{O(\gamma)})$. 
  
\end{restatable}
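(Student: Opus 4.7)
The plan is to apply a double-counting argument over small connected subgraphs (``balls'') in the memory graph $M$ of the alleged data structure. First, by condition (i), the query time is $Q(n) + \gamma t = O(\gamma t)$, so for every query $q$ we have $|M_q| \le c\gamma t$ for some absolute constant $c$. Since each of the $t$ ranges in $R_q$ must occupy at least one cell of $M_q$, I will decompose $M_q$ into connected subgraphs of size exactly $\beta := 2c\gamma$ (plus possibly one smaller piece). A standard rooted-decomposition of the connected subgraph $M_q$ yields $k = O(|M_q|/\beta) = O(t)$ such balls, and since the $t$ output cells are spread over $O(t)$ balls, a pigeonhole estimate guarantees that $\Omega(t/\gamma)$ of them contain at least two cells storing distinct ranges of $R_q$; call these \emph{witness balls} for $q$.

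The second ingredient is a uniform count of the \emph{universe} of balls: because the pointer machine has out-degree bounded by $2$, any connected subgraph of size $\beta$ rooted at a particular cell is determined by $O(\beta)$ bits encoding the DFS shape, so the total number of distinct size-$\beta$ connected subgraphs in $M$ is at most $S(n)\cdot 2^{O(\beta)} = S(n)\cdot 2^{O(\gamma)}$. Now I integrate the witness count against $\mu$: on one side,
\[
\int_{U} \#\{\text{witness balls for } q\}\, d\mu(q) \;\ge\; \Omega(t/\gamma).
\]
On the other side, for each fixed ball $B$, the set of queries for which $B$ is a witness is contained in $\{q : r_1, r_2 \in R_q\}$ for some pair $\{r_1,r_2\}$ of ranges stored in $B$, and condition (ii) bounds its measure by $v$. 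Summing over all balls in $M$ gives the upper bound $S(n)\cdot 2^{O(\gamma)}\cdot v$ (the choice of which pair to charge $B$ to can be fixed by an arbitrary tie-breaking rule; losing a factor of $\binom{\beta}{2}=2^{O(\gamma)}$ if needed is harmless).

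Combining the two estimates and absorbing the $\gamma$ and $\binom{\beta}{2}$ factors into $2^{O(\gamma)}$ yields $S(n) = \Omega(t\,v^{-1}/2^{O(\gamma)})$, and the second form $\Omega(Q(n)v^{-1}/2^{O(\gamma)})$ follows from the hypothesis $\gamma t \ge Q(n)$. The main technical obstacle I expect is the ball-counting step: justifying that the ``strengthened'' pointer machine described in the preliminaries still satisfies the bounded-out-degree encoding (so that a size-$\beta$ rooted connected subgraph admits an $O(\beta)$-bit description), and ensuring the rooted decomposition of $M_q$ truly yields $O(t)$ pieces with the pigeonhole needed for witness balls. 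Everything else is routine double counting once these structural facts are in place.
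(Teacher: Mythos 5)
Your proposal is correct and follows essentially the same route as the paper's proof: decompose the explored subgraph $M_q$ into $O(\gamma)$-size connected pieces each containing two cells that store distinct output ranges (the paper's ``forks''), bound the total number of such small rooted connected subgraphs by $S(n)2^{O(\gamma)}$, and double-count their pairwise-intersection measures against condition (ii), absorbing all polynomial-in-$\gamma$ losses into $2^{O(\gamma)}$. The only differences are cosmetic: the paper extracts $\Omega(t)$ two-marked-cell forks via a greedy bottom-up procedure instead of your uniform chopping plus pigeonhole (which yields only $\Omega(t/\gamma)$ witness balls, harmlessly), and it cites Afshani's lemma for the $S(n)2^{O(\gamma)}$ count that you re-derive by the out-degree-$2$ encoding argument.
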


For the proof of this theorem, we refer the readers to \autoref{sec:frameworkproof}.
In our applications, $\mu$ will basically be the Lebesgue measure and $U$ will be the unit cube. 

\section{Queries on Catalog Paths}

In this section, we give a simple solution for when the catalog graph is a path.
It will be used as a building block for later data structures.

\begin{theorem}
\label{thm:path}
Consider a catalog path $G$, in which each vertex is associated with a planar subdivision.
Let $n$ be the total complexity of the subdivisions. 
We can construct
a data structure using $O(n)$ space such that given any query $(q, \pi)$, where
$q$ is a query point and $\pi$ is a subpath, all regions containing $q$ along
$\pi$ can be reported in time $O(\log{n}+|\pi|)$.
\end{theorem}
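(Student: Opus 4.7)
The plan is to reduce two-dimensional iterative point location on a catalog path to the classical one-dimensional fractional cascading of Chazelle and Guibas, used essentially as a black box. First I would replace each $S_i$ by its trapezoidal decomposition, which preserves the total complexity up to a constant factor. Inside a single $S_i$, point location then reduces to two sequential predecessor searches: the query $q_x$ is searched in the sorted list $X_i$ of x-coordinates of the vertical slab boundaries of $S_i$, identifying the slab containing $q_x$; then $q_y$ is searched in the sorted list of horizontal trapezoid boundaries inside the identified slab.

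For the first of these two searches I would invoke Chazelle--Guibas fractional cascading on the catalog path $v_1,\dots,v_m$ with lists $X_1,\dots,X_m$. Since the catalog graph has degree two and $\sum_i|X_i|=O(n)$, this yields an $O(n)$-size augmented structure that, on any query $(q_x,\pi)$, returns the slab containing $q_x$ in every $S_i$ with $i\in\pi$ in total time $O(\log n+|\pi|)$.

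For the intra-slab y-search, I would represent each $S_i$ by a Sarnak--Tarjan style persistent search tree over its trapezoids keyed by $x$, so that every slab of $S_i$ corresponds to a tree version whose in-order traversal reproduces the sorted y-list of that slab. A second application of fractional cascading is then used to chain the consecutive slab y-lists along $\pi$, so that, after an initial $O(\log n)$ search in the first subdivision of $\pi$, each subsequent y-search costs only amortized $O(1)$ time.

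The main obstacle is precisely this second layer of cascading: the relevant y-list at $v_i$ depends on the query, whereas the classical theorem assumes fixed lists at the vertices of the catalog graph. My plan to resolve it uses the fact that, in the persistent tree of $S_i$, consecutive versions differ in only $O(1)$ nodes; this lets me compress the bridge information between $S_i$ and $S_{i+1}$ across all relevant version pairs into a structure of size $O(n_i+n_{i+1})$, so that the total bridge storage telescopes to $O(n)$ along the path while still supporting constant-time bridge traversals per query. Combining the two fractional-cascading layers with the initial $O(\log n)$ point location in $S_a$ yields the claimed $O(\log n+|\pi|)$ query time and $O(n)$ space.
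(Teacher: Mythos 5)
Your first cascading layer is fine: the $x$-lists are fixed per vertex, the catalog path has degree two, and their total size is $O(n)$, so Chazelle--Guibas applies verbatim. The gap is in the second layer, and it is not a technicality --- it is the entire difficulty of two-dimensional fractional cascading. The $y$-list you need at $v_i$ is query-dependent (one list per slab), and bridges must connect the slab list of $S_i$ to that of $S_{i+1}$ for every $x$-overlapping pair of slabs. A single slab of $S_i$ can overlap $\Theta(n_{i+1})$ slabs of $S_{i+1}$, so the effective catalog degree is unbounded, and the standard fractional-cascading space argument (which copies a constant fraction of each neighbor's augmented list and relies on bounded degree for the copying to telescope to linear size) breaks down. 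Persistence only controls how versions \emph{within} $S_i$ differ from one another; it says nothing about the relation between the $y$-values of $S_i$ and the completely unrelated $y$-values of $S_{i+1}$, which is precisely what the bridges must mediate. Your claim that the bridge information compresses to $O(n_i+n_{i+1})$ while still supporting constant-time traversal is exactly the assertion that 2D fractional cascading works; it is stated, not constructed. Worse, nothing in your construction uses axis-alignment --- slab decompositions, persistent slab trees, and $O(1)$ changes between consecutive versions hold for arbitrary planar subdivisions --- so if the compression step worked, you would obtain an $O(n)$-space, $O(\log n+|\pi|)$-time structure for general subdivisions along a path, contradicting the Chazelle--Liu lower bound cited in the introduction.

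The paper sidesteps per-vertex constant-time bridging altogether: it cuts the path into blocks of $\log n$ consecutive vertices, observes that answering a query over an entire block is a 2D rectangle stabbing instance (after trapezoidal decomposition the axis-aligned subdivisions become disjoint rectangles), and builds Chazelle's linear-space stabbing structure per block. A query touches $O(|\pi|/\log n+2)$ blocks, pays $O(\log n)$ overhead per block plus one output per vertex, giving $O(\log n+|\pi|)$ total; orthogonality enters crucially through rectangle stabbing, and the amortization is over blocks rather than over individual bridge hops. To rescue your plan you would need either an actual construction of the query-dependent bridges that exploits orthogonality, or a block-style amortization of this kind.
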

\begin{proof}
  We can convert each subdivision into a set of disjoint rectangles of total size 
  $O(n)$ using
  trapezoidal decomposition~\cite{bcko08}.
  Then, we partition $G$ into $m=\lceil |G|/\log n \rceil$ paths, $G_1, \cdots, G_m$ 
  where each path except potentially for $G_m$ has size $\log n$ and $G_m$ has size at most $\log n$. 

  Now we use an observation that was also made in previous papers~\cite{aal10,afshani2012higher,rahul2014improved}:
  when $H=G$, the two-dimensional fractional cascading can be reduced to rectangle stabbing. 
  As a result, for each $G_i$, $1 \le i \le m$, we collect all the rectangles of its 
  subdivisions and build a 2D rectangle stabbing data structure on them. 
  By~\autoref{lem:stabbing} this requires $O(n)$ space. 
  Now given a query subpath of length $|\pi|$, we use the rectangle stabbing data structures on the subdivisions of
  each $G_i$ as long as $|G_i \cap \pi| > 0$.
  Since $\pi$ is a path, for at most two indices $i$ we will have $0 < |G_i \cap \pi| < \log n$ and for the rest
  $|G_i \cap \pi| = |G_i| = \log n$.
  This gives us $O(\log n + |\pi|)$ query time. 
\end{proof}

\section{Path Queries on Catalog Trees}

Now we consider answering path queries on catalog trees. 
We first show optimal data structures for trees of different heights. 
It turns out we need different data structures to achieve optimality when heights differ. 
We then present a data structure using $O(n\alpha_c(n))$ space that can answer path queries in 
$O(\log{n}+|\pi|+\min\{|\pi|\sqrt{\log{n}},\sqrt{|\pi|}\log{n}\})$ time and a data structure using $O(n)$ space 
answering path queries in $O(\log{n}+|\pi|+\min\{|\pi|\sqrt{\log{n}},\alpha(n)\sqrt{|\pi|}\log{n}\})$ time, 
where $c\ge 3$ is any constant and $\alpha_k(n)$ is the $k$-th function in 
the inverse Ackermann hierarchy~\cite{cormen2009introduction}  
and $\alpha(n)$ is the inverse Ackermann function~\cite{cormen2009introduction} . 
We also present lower bounds for our data structures. 
Without loss of generality, we assume the tree is a binary tree.

\subsection{Trees of height $h\le\frac{\log{n}}{2}$}
\subsubsection{The Upper Bound}
For trees of this height, we present the following upper bound.
The main idea is to use the sampling idea that is employed previously~\cite{chan:revisit,afshani2012higher}, however,
there are some main differences.
Instead of random samples or shallow cuttings, we use intersection sensitive cuttings~\cite{berg1995cuttings} and
more notably, the fractional cascading on an arbitrary tree cannot be reduced to a geometric problem such as
3D rectangle stabbing, so instead we do something else.

\begin{lemma}
\label{lem:shorttreepath}
  Consider a catalog tree of height $h\le\frac{\log{n}}{2}$ in which each vertex is associated with a planar subdivision.
  Let $n$ be the total complexity of the subdivisions. 
  We can build a data structure using $O(n)$ space such that given any query
  $(q, \pi)$, where $q$ is a query point and $\pi$ is a path, all regions
  containing $q$ along $\pi$ can be reported in time
  $O(\log{n}+|\pi|\sqrt{\log{n}})$.  \end{lemma}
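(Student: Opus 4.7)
The plan is to combine axis-aligned $(1/r)$-cuttings at each catalog vertex with the catalog-path data structure from \autoref{thm:path}, exploiting the fact that a binary tree of height at most $(\log n)/2$ has only $O(\sqrt n)$ vertices. First, I would trapezoidally decompose each subdivision $S_v$ so every face is an orthogonal rectangle, and then compute an orthogonal $(1/r)$-cutting $C_v$ via \autoref{cor:orth-sen-cutting}, choosing $r$ so that every conflict list holds $\Theta(\sqrt{\log n})$ rectangles; this makes $C_v$ itself a rectangular subdivision of $O(n_v/\sqrt{\log n})$ cells. Summed over the tree, the cutting cells total $O(n/\sqrt{\log n})$ and the conflict-list entries total $O(n)$, both within the linear space budget. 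For each conflict list I keep its $\Theta(\sqrt{\log n})$ constituent rectangles in a flat array, so that once $q$ has been placed inside a cutting cell of $C_v$, a straight linear scan finds the face of $S_v$ containing $q$ in $O(\sqrt{\log n})$ time.

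Next, I treat the coarse cuttings $\{C_v\}$ as a new catalog of much smaller subdivisions and install on each heavy path of the tree a copy of the catalog-path data structure from \autoref{thm:path}, using the $C_v$'s as inputs. This lets the cell of $C_v$ containing $q$ be identified, for every vertex $v$ on a subpath of a single heavy path of length $L$, in $O(\log n + L)$ total time; combined with the $O(\sqrt{\log n})$ conflict-list scan per vertex this answers a subpath of a heavy path in $O(\log n + L\sqrt{\log n})$ time. Because the sum of cutting complexities is $O(n/\sqrt{\log n})$, the sum of \autoref{thm:path} structures over all heavy paths is still $O(n)$.

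The main obstacle is that a query path $\pi$ can cross up to $\Theta(\log n)$ heavy paths, so naively paying the $O(\log n)$ setup cost of \autoref{thm:path} on each heavy path would add a $\Theta(\log^2 n)$ term and destroy the claimed bound for short $\pi$. To remove it, I plan to precompute, for every light edge $(u,v)$ of the heavy-path decomposition, bridge pointers that transfer the cutting-cell answer from $C_u$ into $C_v$: once the cell of $C_u$ holding $q$ is known, a short local search of $O(\sqrt{\log n})$ candidate cells in $C_v$ around the pointed-to cell finds the right one without re-invoking the $O(\log n)$ initial search of \autoref{thm:path} on the next heavy path. The delicate technical step is to verify that the bridge pointers collectively fit in $O(n)$ space, which should follow from the $O(\sqrt n)$ bound on the number of catalog vertices together with a careful charging of local neighborhoods to conflict-list sizes. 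Assuming this succeeds, the first heavy path on $\pi$ pays the $O(\log n)$ entry cost, every subsequent heavy-path transition costs $O(\sqrt{\log n})$, and the total query time becomes $O(\log n + |\pi|\sqrt{\log n})$ as claimed.
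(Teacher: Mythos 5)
There is a genuine gap, and it sits exactly at the step you flag as ``delicate'': the bridge pointers across light edges. The cuttings $C_u$ and $C_v$ at adjacent catalog vertices are cuttings of completely unrelated subdivisions, so knowing the cell of $C_u$ that contains $q$ gives essentially no information about which cell of $C_v$ contains $q$. A single cell $c\in C_u$ can intersect $\Omega(|C_v|)$ cells of $C_v$ (e.g.\ when one cutting consists of long horizontal slabs and the other of long vertical slabs), so there is no set of $O(\sqrt{\log n})$ ``candidate cells around the pointed-to cell'' that is guaranteed to contain the answer; and if you instead try to store, for each cell of $C_u$, a search structure on the cells of $C_v$ it meets, the total size is the overlay complexity, which for orthogonal subdivisions can be as large as $\Theta(|C_u|\cdot|C_v|)$, and the per-transition search cost is again a $\log$ of a potentially large conflict set, not $O(\sqrt{\log n})$. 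The $O(\sqrt n)$ bound on the number of catalog vertices does not help, because the blow-up is per pair of adjacent \emph{subdivisions}, not per pair of vertices. This difficulty of transferring a point-location answer between unrelated subdivisions is precisely the obstruction that makes 2D fractional cascading hard (it is the heart of the Chazelle--Liu lower bound), so it cannot be waved through. Without the bridges, your scheme pays $O(\log n)$ per heavy-path transition, and a query path can cross $\Theta(\min\{|\pi|,\log n\})$ heavy paths, so the bound degenerates to roughly $O(|\pi|\log n)$, i.e.\ the trivial solution.

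The paper's proof avoids any vertex-to-vertex transfer altogether. It also builds an intersection-sensitive cutting per subdivision (via \autoref{cor:orth-sen-cutting}), but with much larger conflict lists, of size $O(r^2)$ with $r=2^{\sqrt{\log n}}$, so each $C_i$ has only $O(n_i/r^2)$ cells; each conflict list gets an $O(\log r)=O(\sqrt{\log n})$-time point-location structure. Then, for \emph{every} subpath of the tree of length at most $\log r=\sqrt{\log n}$, it collects the cutting cells of all vertices on that subpath and builds one 2D rectangle-stabbing structure (\autoref{lem:stabbing}) on them; since a vertex lies on only $\Theta(r^{\log 3}\log r)$ such subpaths and contributes only $O(n_i/r^2)$ cells, the replication still totals $O(n\log r/r^{2-\log 3})=O(n)$ space. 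A query path is covered by $|\pi|/\sqrt{\log n}$ of these precomputed subpaths, each answered by one $O(\log n)$ stabbing query plus an $O(\sqrt{\log n})$ point location per vertex, giving $O(\log n+|\pi|\sqrt{\log n})$. In other words, the $O(\log n)$ entry cost is not eliminated by cascading information between neighbors, but amortized over blocks of $\sqrt{\log n}$ vertices that are merged into a single stabbing instance in advance; this is the idea your proposal is missing.
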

\begin{proof}
  Let $r$ be a parameter to be determined later.
  Consider a planar subdivision $A_i$ and let $n_i$ be the number of rectangles in $A_i$. 
  We create an intersection sensitive $(r^2/n_i)$-cutting $C_i$ on $A_i$.
  By \autoref{cor:orth-sen-cutting}, $C_i$ contains $O(n_i/r^2)$ cells and
  each cell of $C_i$ is an axis-aligned rectangle.
  Furthermore, the conflict list size of each rectangle is $O(r^2)$.
  For each cell in $C_i$, we build an optimal point location data structure on its conflict list.
  The total space usage is linear, since total size of the conflict lists is linear. 

  Then, we consider every path of length at most $\log r$  in the catalog graph, and we call them subpaths.
  For every subpath, we collect all the cells
  of the cuttings belonging to the vertices of the subpath and 
  build a 2D rectangle stabbing data structure  on them. Since the
  degree of any vertex is bounded by $3$, each vertex is contained in at most 
  \begin{align*}
  \sum_{j=0}^{\log{r}}\sum_{i=0}^{j}3^i\cdot{3^{j-i}}=\Theta(r^{\log{3}}\log{r})
  \end{align*}
  many subpaths.
  Then the total space usage of the 2D rectangle stabbing data structures is
  bounded by $O(n\log{}r/r^{2-\log{3}})=O(n)$. 
  Given any query path $\pi$, it can be covered by $|\pi|/\log r$ subpaths. 
  For each subpath, we can find all the cells of the cuttings containing the query point in 
  $O(\log n)$ time and then perform an additional point location query on its conflict list,
  for a total of $O(\log n + (\log r)^2)$ query time per subpath.
  Thus, the query time of this data structure is bounded by
  \begin{align*}
  \frac{|\pi|}{\log{r}}(\log{n}+\log{r}\cdot{\log{r}}).
  \end{align*}
  We pick $r=2^{\sqrt{\log{n}}}$, then we obtain the desired $O(\log{n}+|\pi|\sqrt{\log{n}})$ query time.
\end{proof}

\subsubsection{The Lower Bound}
We now present a matching lower bound. We show the following:

\begin{lemma}
\label{lem:shortpathtreelower}
Assume, given any catalog tree of height $\sqrt{\log n}\le h\le \frac{\log n}{2}$ in which each vertex is associated with a planar subdivision
  with $n$ being the total complexity of the subdivisions, we can build a data structure that satisfies the following:
  it uses at most $n 2^{\varepsilon \sqrt{\log n}}$ space, for a small enough constant $\varepsilon$, and 
  it can answer 2D OFC queries $(q,\pi)$. 
  Then, its query time must be $\Omega(|\pi|\sqrt{\log{n}})$.
\end{lemma}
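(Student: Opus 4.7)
The plan is to invoke the pointer-machine lower bound framework of \autoref{thm:framework} on a carefully designed adversarial instance. I take $T$ to be a complete binary catalog tree of height $h$ with $\sqrt{\log n}\le h\le \tfrac{\log n}{2}$, and attach to each vertex an axis-aligned planar subdivision of the unit square $[0,1]^2$ so that the overall graph subdivision complexity is $\Theta(n)$. The query space is $U=[0,1]^2\times \mathcal{P}$, where $\mathcal{P}$ is the family of root-to-leaf paths of $T$, equipped with the product of the Lebesgue measure on $[0,1]^2$ and the uniform measure on $\mathcal{P}$. The ``ranges'' are the $(v,C)$ pairs consisting of a vertex $v$ of $T$ and a cell $C$ of its subdivision, and for every query $(q,\pi)\in U$ the output $R_q$ has size exactly $t=h+1$ (one cell per node on the path).

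For two ranges $(v_1,C_1)$ and $(v_2,C_2)$ with $v_1$ a strict ancestor of $v_2$ at depth $d_2$, the pair-coverage measure is exactly $\mathrm{area}(C_1\cap C_2)\cdot 2^{-d_2}$; all other pairs have coverage zero. The core technical step is to design the subdivisions so that the uniform upper bound $v$ on this quantity is small enough to give $t/v = \Omega(n\cdot 2^{\delta\sqrt{\log n}})$ for some absolute constant $\delta>0$. Naive grid subdivisions only achieve $v=\Theta(h/n)$, leading to $t/v=O(n)$, which is insufficient. I expect the construction to use depth-dependent anisotropic cells (for instance, thin axis-aligned slabs whose orientations alternate and whose widths are geometrically tuned to $h$) so that the intersection of two cells from nested nodes is substantially smaller than either individual cell, beating the grid bound by a factor of $2^{\Omega(\sqrt{\log n})}$ uniformly over all pairs.

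With the $v$-bound in hand, the framework delivers the lemma. Suppose towards contradiction that there is a data structure using $S \le n\cdot 2^{\varepsilon\sqrt{\log n}}$ space and answering queries in $Q(n)+\gamma|\pi|$ time with $\gamma < c\sqrt{\log n}$ for a small absolute constant $c$. Since $t=h+1\ge \sqrt{\log n}$ and $Q(n)=O(\log n)$, condition~(i) of \autoref{thm:framework} is satisfied. The framework then yields
\[
S \;=\; \Omega\!\left(\frac{t}{v\cdot 2^{O(\gamma)}}\right) \;=\; \Omega\!\left(n\cdot 2^{(\delta-O(c))\sqrt{\log n}}\right).
\]
Choosing $c$ small enough that $\delta-O(c)>\varepsilon$ contradicts the space hypothesis, so $\gamma=\Omega(\sqrt{\log n})$ and the total query time is $\Omega(|\pi|\sqrt{\log n})$. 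The main obstacle, as indicated, is pushing the uniform pair-coverage bound below the naive $h/n$ ceiling by the required $2^{\Omega(\sqrt{\log n})}$ factor; this is where the assumption $h\ge\sqrt{\log n}$ is critical, and where the delicate geometric construction resides.
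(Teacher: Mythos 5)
Your framework setup is sound and is in fact equivalent to the paper's: the paper phrases the instance as a special 3D rectangle stabbing problem (the $z$-coordinate encoding the leaf of a root-to-leaf path, with a node at depth $d$ contributing boxes of $z$-extent $2^{-d}$), which is exactly your product measure on $[0,1]^2\times\mathcal{P}$, and your pair-coverage formula $\mathrm{area}(C_1\cap C_2)\cdot 2^{-d_2}$ matches the 3D intersection volume. You also correctly identify that a naive grid only gives $t/v=O(n)$ and that the whole difficulty is a construction achieving $v\le (h/n)\,2^{-\Omega(\sqrt{\log n})}$ uniformly. However, that construction is precisely what you do not supply — you only conjecture that ``depth-dependent anisotropic cells'' with geometrically tuned widths will work — and it is the actual content of the paper's proof of \autoref{lem:shortpathtreelower}. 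The obstruction is real: if you simply let widths shrink by a fixed factor $K$ at every level of the tree, the requirement that every cell fit inside the unit square forces roughly $K^{h}\cdot 2^{h}\cdot (h/n)\le 1$, i.e.\ $K=2^{O((\log n)/h)}$, which is $2^{O(1)}$ in the hardest regime $h=\Theta(\log n)$ and yields only $t/v=O(n)$ — no better than the grid. The paper's fix is a two-parameter ``class/group'' scheme: levels are grouped into classes of $r$ consecutive levels; within a class the widths shrink by $K=2^r$ per level (heights grow correspondingly) and then reset at the next class, while the $z$-extents (your $2^{-d_2}$ factors) keep halving, so any two distinct shapes differ either by a factor $K$ in width or by a factor $2^r$ in $z$-extent, giving $v\le V/2^{r}$ with $V=h/n$. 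The feasibility constraint then involves only $K^{r}=2^{r^2}$ rather than $K^{h}$, namely $2^{r^2-r}2^{h-1}h\le n$, which holds with $r=\sqrt{\log n}/4$ exactly because $h\le \frac{\log n}{2}$. Without this (or an equivalent) explicit construction and its feasibility check, the proposal has a genuine gap at its central step.

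A secondary, fixable imprecision: you verify condition (i) of \autoref{thm:framework} by assuming $Q(n)=O(\log n)$, but the lemma places no such restriction on the additive term of the query time. The correct argument is the dichotomy the paper uses: set $\gamma=\delta\sqrt{\log n}$; either $\gamma t\ge Q(n)$, in which case the framework forces space $\Omega(n2^{\Omega(\sqrt{\log n})})$, contradicting the hypothesis, or else $Q(n)>\gamma t=\Omega(h\sqrt{\log n})$ — and in both cases the total query time is $\Omega(|\pi|\sqrt{\log n})$.
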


\begin{proof}
We will use the following idea: We consider a special 3D rectangle stabbing
problem and show a lower bound using \autoref{thm:framework}. 
We will use the 3D Lebesgue measure, denoted by $V(\cdot)$.
Then we show a reduction from this problem to a 2D OFC problem on trees to obtain the
desired lower bound.

We consider the following instance of 3D rectangle stabbing problem. 
The input $n$ rectangles are partitioned into $h$ sets of size $n/h$ each. 
The rectangles in each set are pairwise disjoint and they together tile the unit cube in 3D. 
The depth (i.e., the length of the side parallel to the $z$-axis) of rectangles in set $i=0,1,\cdots,h-1$ is $1/2^i$.
In fact, the rectangles in set $i$ can be partitioned into $2^i$ subsets
where the projection of the rectangles in the $j$-th subset, $j=0,1,\cdots,2^i-1$, onto the $z$-axis
is the interval $[\frac{j}{2^i},\frac{j+1}{2^i}]$. See \autoref{fig:rectset2} for an example.
\ignore{
\autoref{fig:rectset2} is an example of rectangles in set $2$.
Note that the depth of each rectangle is $\frac{1}{2^2}$, while the other two dimensions
can be arbitrary as long as they together tile the unit cube and the number of them is $n/h$.
Note that the rectangles in this set can be viewed as four subsets of rectangles
by cutting the $z$-dimension of the unit cube into four even intervals.
The projection of each subset into the $xy$-plane gives us an axis-aligned planar subdivision.
}

\begin{figure}[h]
  \centering
  \includegraphics[scale=0.5]{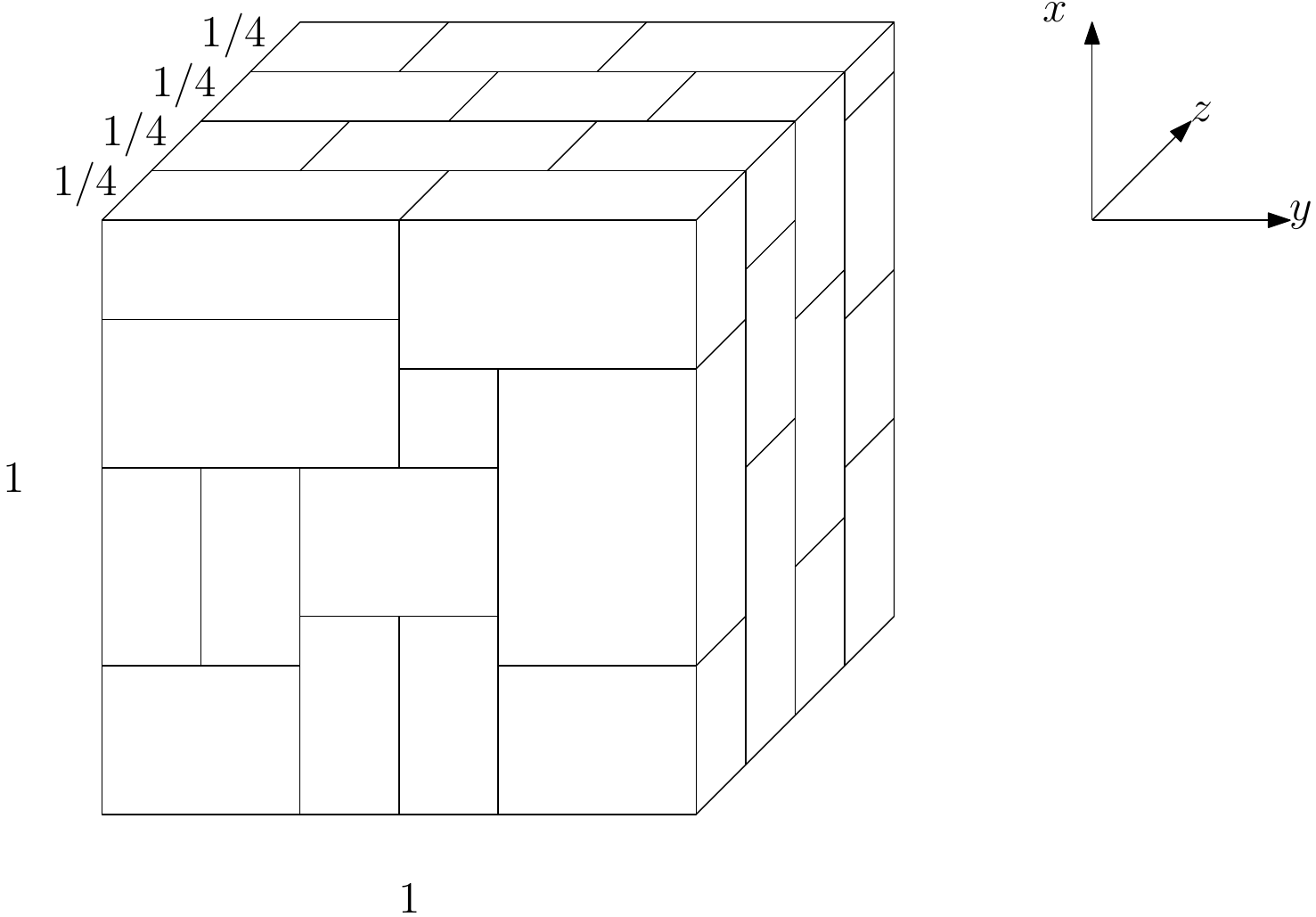}
  \caption{An example of rectangles in set 2}
  \label{fig:rectset2}
\end{figure}

We first show the reduction: 
assume, we are given an instance of the special 3D rectangle stabbing problem described above. 
We build a balanced binary tree of height $h$ on the $z$-axis as the catalog graph. Note that the
number of vertices at layer $i$ of the tree is the same as the number of subsets in set $i$. 
We project the rectangles in each subset to the $xy$-plane and obtain a 2D axis-aligned planar
subdivision. We attach each of the subdivisions to the corresponding vertices.
Consider a 2D OFC query, $(q,\pi)$ in which $\pi$ is a path that connects the root to a leaf.
We lift $q$ to a point $q'$ in 3D appropriately:
W.l.o.g., assume the leaf is the $j$-th leaf.
To obtain $q'$, we assign the $z$ coordinate $\frac{j+0.5}{2^{h-1}}$ to $q$.
By our construct, the $z$-axis projection of any rectangle in the nodes from the root to the leaf
contains the $z$ coordinate of $q'$.
This construction ensures that finding the rectangles that contain $q'$
is equivalent to performing 2D OFC query $(q,\pi)$.

\ignore{
By construction, the regions reported by a point location query from the root to a leaf
correspond to all the rectangles containing the query point after lifting it to
3D appropriately.
}

Now we describe a hard instance of the special rectangle stabbing problem to establish
a lower bound. It will have rectangles of $h$ different shapes. 
For each shape, we tile (disjointly cover) the unit cube 
using isometric copies of the shape to obtain a set of rectangles.
We collect
every $r$ different shapes into a class and obtain $h/r$ classes, where $r$ is
a parameter to be determined later. 
We say that the $j$-th rectangle in a class has group number $j$, $j=0,1,\cdots, r-1$.
Now we specify the dimensions (i.e., side lengths) of the rectangles. 
For a rectangle in class $i=0,1,\cdots,h/r-1$, with group number $j=0,1,\cdots,r-1$, its dimensions are
\begin{align*}
  [\frac{1}{K^j}\times{K^j\cdot{2^{ir+j}}\cdot{V}}\times{\frac{1}{2^{ir+j}}}],
\end{align*}
where $K, V$ are parameters to be determined later and the $[W\times H \times D]$ notation denotes
an axis-aligned rectangle with width $W$, height $H$, and depth $D$.
Observe that every rectangle
has volume $V$ and thus we need $1/V$ copies to tile the unit cube. By setting $V=h/n$,
the total number of rectangles we generate is $n$. Also note that all the
rectangles in the same group are pairwise disjoint and they together cover the
whole unit cube. This implies for any query point $q$ in the unit cube, it is
contained in exactly $h=|\pi|$ rectangles.

Now we analyze the intersection of any two rectangles. 
First, observe that given two axis-aligned rectangles with dimensions 
$[W_1\times H_1 \times D_1]$ and $[W_2\times H_2 \times D_2]$, their intersection
is an axis-aligned rectangle with dimensions at most 
$[\min\left\{ W_1,W_2 \right\}\times\min\left\{ H_1,H_2 \right\}\times\min\left\{ D_1,D_2 \right\}]$.
Second, by our construction, the rectangles that have identical width, depth, and height are disjoint.
As a result, either the width of the two rectangles will differ by a factor $K$ or their
depth will differ by a factor $2^r$.
This means that, the maximum intersection volume of any two rectangles $R_1, R_2$
in class $i_1, i_2$, group $j_1, j_2$ can be achieved only in one of the following two
cases:

$$
V(R_1\cap{R_2})=
\begin{cases}
\frac{V}{2K} & i_1=i_2 \textrm{ and } j_1=j_2+1,\\
\frac{V}{2^r} & i_1=i_2+1 \textrm{ and } j_1=j_2.\\
\end{cases}
$$

We set $K=2^r$, then the intersection volume of any two rectangles is bounded
by $v=V/2^r$.
However, for the construction to be well-defined, the side length of the rectangles
cannot exceed 1 as otherwise, they do not fit in the unit cube.
The largest height of the rectangles is obtained for $j=r-1$ and $i=\frac{h}{r}-1$.
Thus, we  must have,
\begin{align*}
  K^{r-1}2^{r (\frac{h}{r}-1)+ r-1} V \le 1.
\end{align*}
By plugging in the values $V=h/n$ and $K=2^r$ we get that we must have
\begin{align}
  2^{r^2-r}2^{h-1} h \le n\label{eq:con1}
\end{align}
Since by our assumptions  $h \le \frac{\log n}{2}$, it follows that by 
setting $r= \frac{\sqrt{\log n}}{4}$, the inequality (\ref{eq:con1}) holds.

If $\gamma h \ge Q(n)$ holds, then we satisfy the first condition of \autoref{thm:framework} and thus
we obtain the space lower bound of
\begin{align}
  S(n)=\Omega\left(\frac{Q(n)v^{-1}}{2^{O(\gamma)}}\right)=\Omega\left(\frac{n2^r}{\log n 2^{O(\gamma)}}\right).
\end{align}
Now observe that if we set $\gamma = \delta \sqrt{\log n}$, for a sufficiently small $\delta> 0$, 
then it follows that the data structure must use more than
$\Omega(n 2^{\Omega(\sqrt{\log n})})$ space. 
However, by the statement of our lemma, we are not considering such data structures.
As a result, when $\gamma = \delta \sqrt{\log n}$, the query time must be large enough that the first condition of the framework 
does not hold, meaning, 
we must have $Q(n) \ge \gamma h = \delta \sqrt{\log n} h = \Omega(|\pi| \sqrt{\log n})$.  
\end{proof}

\ignore{
\begin{remark}
We will frequently use this idea to prove lower bounds. In the subsequent proofs, we will omit the description about the rectangle stabbing problem used explicitly since it should be clear from the construction of a hard input instance. We also omit the reduction from a rectangle stabbing problem to a 2D OFC problem as it is similar to the one we describe in the proof of \autoref{lem:shortpathtreelower}.
\end{remark}
}

\subsection{Trees of height $\frac{\log n}{2} < h \le\frac{\log^2{n}}{2}$}

\subsubsection{The Upper Bound}

We start with the following lemma which gives us a data structure that can only answer
query paths that start from the root and finish at a leaf. 
The main idea here is used previously in the context of four-dimensional dominance
queries~\cite{afshani2012higher,chan:revisit} and it uses the observation that
such ``root to leaf'' queries can be turned into a geometric problem, the 3D rectangle
stabbing problem.

\ignore{
  We will still use the idea of intersection sensitive cuttings, 
  but this time instead of storing nodes in 2D rectangle stabbing data structures, 
  we store them in 3D rectangle stabbing data structures. 
  This idea is borrowed from Afshani et al. \cite{afshani2012higher}. 
  They observed that a 2D OFC problem on trees can be solved using 3D rectangle stabbing techniques. 
  We briefly describe the main idea:

  Given a tree, we assign a $z$ range for each node in the tree as follows. 
  For leave $l_i, i\in\{1,2,\cdots,m\}$, where $m$ is the number of leaves, 
  we assign range $[i-1,i)$ to it as its $z$ range. 
  Then we lift the the 2D rectangles induced by the subdivision of a node to a 3D rectangle using the $z$ range. 
  For any internal node, its $z$ range is the union of the $z$ ranges of its children. 
  Now we collect all the 3D rectangles, given a query point $q=(x_q,y_q)$ and a query path $\pi$, 
  we first lift $q$ to be $(x_q,y_q,z_q)$, where $z_q$ is any $z$ value in the $z$ range of the deepest node in $\pi$, 
  and solve the 3D rectangle stabbing problem. 
  By \autoref{lem:stabbing}, this enables us to solve this problem when the query subgraph is a path 
  starting from the root to a leaf in time $O(\log^2{n}/\log\log{n}+|\pi|)$ with $O(n)$ space. 
  Now we show we can achieve better query time by combining the idea of random sampling.
}

\begin{lemma}
\label{lem:pathbalancedtreemiddle}
  Consider a balanced catalog tree of height $h$, $\frac{\log n}{2} < h \le\frac{\log^2{n}}{2}$,
  in which each vertex is associated with a planar subdivision.
  Let $n$ be the total complexity of the subdivisions. 
  We can build a data structure using $O(n)$ space such that given any query
  $(q, \pi)$, where $q$ is a query point and $\pi$ is a path starting from the root to a leaf, 
  all regions containing $q$ along $\pi$ can be reported in time
  $O(\sqrt{|\pi|}\log{n})$.  
\end{lemma}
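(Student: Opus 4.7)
The plan is to reduce root-to-leaf queries to 3D rectangle stabbing, combine intersection-sensitive cuttings with the multi-level stabbing data structure of \autoref{lem:stabbing}, and tune parameters against the path length $h=|\pi|$. This is the same route taken by Afshani et al.~\cite{afshani2012higher} for 4D dominance, where $|\pi|=\Theta(\log n)$ and they obtain $O(\log^{3/2}n)$ query time, which is precisely the special case of our bound.

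First I would perform the 3D lifting: assign each leaf $\ell_j$ the $z$-interval $[j,j+1)$ and give every internal node the union of its children's intervals, so that $z$-intervals are nested along every root-to-leaf path. After trapezoidal decomposition, lift each axis-aligned rectangle of a subdivision to a 3D box by Cartesian-producting with the owning node's $z$-interval. A query $(q,\pi)$ with $\pi$ ending at $\ell_j$ is then answered by lifting $q$ to $(q_x,q_y,j+\tfrac{1}{2})$ and reporting the 3D boxes containing it; by nesting, these are exactly the rectangles of nodes on $\pi$ that contain $q$.

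Second, at each node $v$ with $n_v$ trapezoids I would apply \autoref{cor:orth-sen-cutting} with parameter chosen so that each cell has conflict list of size $O(s)$, producing $O(n_v/s)$ axis-aligned rectangular cells; nodes with $n_v<s$ are handled by a single cell whose conflict list is the entire subdivision. I would equip each conflict list with an optimal linear-space point-location structure (using $O(n)$ space in total), lift every cell to 3D using the owning $z$-interval, and feed the resulting $O(n/s)$ 3D boxes into the $d=3$ instantiation of \autoref{lem:stabbing} with fan-out $H$. A query then executes the 3D stabbing to find, for each level of $\pi$, the unique cell containing the lifted $q$, followed by one point location per level in the corresponding conflict list, for total query time $O(\log^{2}n/\log H + |\pi|\log s + |\pi|)$ and total space $O(n+(n/s)H\log n)$.

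Finally, I would balance parameters by choosing $s = 2^{\log n/\sqrt{|\pi|}}$ and $H=\Theta(s/\log n)$, so that the space simplifies to $O(n)$ and the two dominant query terms both evaluate to $\Theta(\sqrt{|\pi|}\log n)$; the additive $|\pi|$ is subsumed because $|\pi|\le \log^{2}n/2$ implies $|\pi|\le \sqrt{|\pi|}\log n$. The main technical obstacle I anticipate lies at the upper end of the allowed height range, where the ideal $s$ drops toward a constant and the constraints $H\ge 2$ and linear space start to conflict; there one must coarsen $s$ to $\Theta(\log^{1+\varepsilon}n)$ and absorb the resulting $\log\log n$ overhead against the slack between $|\pi|$ and $\sqrt{|\pi|}\log n$, or invoke a separate linear-space stabbing structure tuned to this boundary. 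A secondary subtlety is bounding the extra cells contributed by nodes with small $n_v$, which is harmless provided $|V|=O(n/s)$, a condition that holds throughout the regime where the main balancing applies.
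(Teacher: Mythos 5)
Your proposal is correct and follows essentially the same route as the paper's proof: the identical $z$-interval lifting to 3D rectangle stabbing, intersection-sensitive cuttings with per-cell point location on the conflict lists, and \autoref{lem:stabbing} with $d=3$, balanced by exactly the paper's parameter choices ($r=2^{\log n/\sqrt{h}}$ with fan-out $H=\Theta(r/\log n)$). The boundary concern you flag for $h$ near $\frac{\log^2 n}{2}$ (where the ideal cutting parameter becomes constant and $H\ge 2$ conflicts with linear space) is a fair observation, but the paper's own proof does not treat it either and simply states the same balancing without special-casing that regime.
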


\begin{proof}
Let $r$ be a parameter to be determined later. 
For each subdivision $A_i$, we create an intersection sensitive $(r/n_i)$-cutting $C_i$ on $A_i$.
By the same argument as \autoref{lem:shorttreepath}, 
all the cells in the cuttings are axis-aligned rectangles satisfying 
(i) the conflict set size of any cell in $C_i$ is bounded by $O(r)$ and 
(ii) the total number of cells in $C_i$ is $O(n_i/r)$.

Now we lift each cell in the cuttings to 3D rectangles and 
collect all the 3D rectangles to construct a 3D rectangle stabbing data structure for it. 
This is done as follows.
  We assign a $z$ range for each vertex in the catalog tree;
  Let $m$ be the number of leaves.
  Order the leaves of the catalog tree from left to right and for the
  $i$-th leaf $l_i, i\in\{1,2,\cdots,m\}$,
  we assign the range $[i-1,i)$ as its $z$ range. 
  For any internal vertex, its $z$ range is the union of the $z$ ranges of its children. 
  Then, we lift the 2D rectangles induced by the subdivision of a vertex to a 3D rectangle using the $z$ range
  (i.e., by forming the Cartesian product of the rectangle and the $z$ range).
  We store the 3D rectangles in a rectangle stabbing data structure.
  Given a query point $q=(x_q,y_q)$ and a query path $\pi$, 
  we first lift $q$ to be $(x_q,y_q,z_q)$, where $z_q$ is any $z$ value in the
  $z$ range of the deepest vertex in $\pi$, and then query the 3D rectangle stabbing data structure. 

In addition, for each cell in a cutting, we build an optimal point location
data structure on its conflict set. 
All these point location data structures take space $O(\sum_in_i)=O(n)$ in total 
and each of them can answer a point location query in time $O(\log{r})$.

To achieve space bound $O(n)$ for the 3D rectangle stabbing data structure, it suffices to choose $H=\frac{r}{\log{(n/r)}}$. 
We then balance the query time for 3D rectangle stabbing and 2D point locations to achieve the optimal query time

$$
\log n \cdot \frac{\log n }{\log \frac{r}{\log{(n/r)}}}=h \cdot \log r.
$$

 We pick $r=2^{\log n / \sqrt h}$ and the query time is bounded by $O(\sqrt{h}\log{n})=O(\sqrt{|\pi|}\log{n})$.
 \end{proof}

The above data structure is not a true fractional cascading data structure because it can only
support restricted queries. 
To be able to answer query paths of arbitrary lengths $>\frac{\log n}{2}$ and 
$\le\frac{\log^2 n}{2}$, we need the following result.

\begin{lemma}
\label{lem:pathtreetworanges}
  Consider a catalog tree in which each vertex is associated with a planar subdivision.
  Let $n$ be the total complexity of the subdivisions and 
  let $h_1$ and $h_2$, $h_1 < h_2$, be two fixed parameters. 
  We can build a data structure using $O(n\log(h_2/h_1))$ space such that given any query
  $(q, \pi)$, where $q$ is a query point and $\pi$ is a path whose length obeys $h_1 \le |\pi| \le h_2$,
  all regions containing $q$ along $\pi$ can be reported in time
  $O(\sqrt{|\pi|}\log{n})$.  
\end{lemma}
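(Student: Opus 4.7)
The plan is to reduce arbitrary path queries to vertical (ancestor–descendant) path queries, and then serve each vertical path using a collection of \autoref{lem:pathbalancedtreemiddle} structures built on overlapping ``canonical'' subtrees at $O(\log(h_2/h_1))$ geometric scales. Any query path $\pi$ splits at its lowest common ancestor into two vertical paths $\pi_1,\pi_2$, each of length at most $|\pi|$, so it suffices to build a data structure that answers vertical path queries of length $\ell\in[h_1,h_2]$ in time $O(\sqrt{\ell}\log n)$ and space $O(n\log(h_2/h_1))$.

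For each scale $i=0,1,\ldots,\lceil\log(h_2/h_1)\rceil$, set $\ell_i=2^i h_1$ and call a vertex an \emph{anchor} of scale $i$ if its depth is a multiple of $\ell_i$. For every anchor $v$ at depth $d_v=k\ell_i$, let $T_v$ be the subtree rooted at $v$ truncated at depth $d_v+3\ell_i$, and equip $T_v$ with the structure of \autoref{lem:pathbalancedtreemiddle} (or \autoref{lem:shorttreepath} if $T_v$ is too shallow to meet its height condition; the guarantees match in that regime). A vertex at depth $d$ of the original tree lies in at most four of the subtrees of a given scale, namely those whose anchors sit at depths $(k-j)\ell_i$ for $0\le j\le 3$, so each scale contributes $O(n)$ space and the total is $O(n\log(h_2/h_1))$.

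To answer a vertical query from $u$ down to $w$ of length $\ell\in[\ell_i,2\ell_i)$, take $v$ to be the ancestor of $u$ at depth $\lfloor d_u/\ell_i\rfloor\,\ell_i$; then $v$ is an anchor of scale $i$, with $d_u-d_v<\ell_i$ and $d_w-d_v<3\ell_i$, so the entire path from $v$ to $w$ lies in $T_v$. I then pick any descendant leaf $l$ of $w$ in $T_v$, invoke the stored data structure on $T_v$ with the root-to-$l$ path, and discard the returned rectangles coming from vertices strictly above $u$ or strictly below $w$. The invoked query costs $O(\sqrt{3\ell_i}\log n)=O(\sqrt{\ell}\log n)$ time and returns at most $O(\ell_i)=O(\ell)$ rectangles, so the filter is also $O(\ell)$; since $\ell\le h_2\le\tfrac{\log^2 n}{2}$ implies $\ell\le\sqrt{\ell}\log n$, the total cost per vertical half is $O(\sqrt{|\pi|}\log n)$, and hence $O(\sqrt{|\pi|}\log n)$ for $\pi$ itself.

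\textbf{Main obstacle.} The awkward point is that the top endpoint $u$ of a vertical path can sit at an arbitrary depth, so its nearest anchor $v$ above can be up to $\ell_i-1$ levels higher and the canonical path stored at $v$ is strictly longer than the true query path $u\to w$; symmetrically, the bottom endpoint need not be a leaf of $T_v$. I absorb this slack by letting each scale-$i$ subtree have height $3\ell_i$ (enough to host an offset of up to $\ell_i$ above $u$ plus a true path of length up to $2\ell_i$ below) and cleaning up afterwards with a linear-time filter that remembers which vertex each returned rectangle came from. This stays within the $O(\sqrt{|\pi|}\log n)$ budget precisely because the restriction $|\pi|\le\tfrac{\log^2 n}{2}$ makes the extra $O(|\pi|)$ filtering term no larger than the search term.
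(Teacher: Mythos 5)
Your argument is correct, but it takes a genuinely different route from the paper's. The paper also splits $\pi$ at its highest vertex into two downward paths, but it then builds the structures of \autoref{lem:pathbalancedtreemiddle} on a recursive halving hierarchy: the tree of height $h_2$ is split at mid-height into $T_0,\ldots,T_m$, each gets its own structure, and the construction recurses until height $h_1$; at query time an ``anchored'' path is decomposed across the $O(\log(h_2/h_1))$ recursion levels into root-to-leaf pieces of geometrically decreasing length, so the per-level costs telescope to $O(\sqrt{|\pi|}\log n)$, and only the base case at height $h_1$ wastes work by extending to a root-to-leaf path. You instead build, at each of $O(\log(h_2/h_1))$ geometric scales, overlapping canonical subtrees of height $3\ell_i$ anchored at depths divisible by $\ell_i$ (the factor-4 overlap keeps each scale at $O(n)$ space), and you answer a vertical query with a \emph{single} structure at the matching scale by extending the query to a root-to-leaf path of that subtree and filtering out the $O(\ell)$ extraneous answers. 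Your approach buys a simpler query algorithm (one invocation per half instead of a telescoping sequence across levels) at the price of over-reporting, which is why you need $|\pi|\le h_2\le\tfrac{\log^2 n}{2}$ so that the $O(|\pi|)$ filtering term is dominated by $\sqrt{|\pi|}\log n$; this is harmless, since that is the only regime in which the lemma is invoked (and in which its stated bound is even attainable, as reporting alone costs $\Omega(|\pi|)$), and the paper's own use of \autoref{lem:pathbalancedtreemiddle} carries the same implicit height restriction. Two minor loose ends you should patch: after splitting at the lowest common ancestor one of the two halves may be shorter than $h_1$, so it is not covered by your case $\ell\in[\ell_i,2\ell_i)$ — but scale $0$ handles it verbatim at cost $O(\sqrt{h_1}\log n)=O(\sqrt{|\pi|}\log n)$; and your truncated subtrees $T_v$ need not be balanced and their heights can drift outside the range of \autoref{lem:pathbalancedtreemiddle} by constant factors — you already flag the shallow case via \autoref{lem:shorttreepath}, and the same slack is present in the paper's own recursion, whose bound really depends on the subtree height rather than on balance.
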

\begin{proof}
  First, observe that w.l.o.g., we can assume that the height of the catalog tree is at most
  $h_2$: we can partition the catalog tree into a forest by cutting off vertices whose depth is a 
  multiple of $h_2$. Since the length of $\pi$ is at most $h_2$, it follows that $\pi$ can only
  contain vertices from at most two of the trees in the resulting forest, meaning, answering
  $\pi$ can be reduced to answering at most two queries on trees of height at most $h_2$.

  Thus, w.l.o.g., assume $v$ is the root of the catalog tree of height $h_2$ and $\pi$ is a
  path of length at least $h_1$ in this catalog tree.
  We build the following data structures.
  Let $v_1, \cdots, v_m$ be the vertices at height $h_2/2$.
  Let $T_0$ be the tree rooted at $v$ and cut off at height $h/2$ with $v_1, \cdots, v_m$ 
  being leafs and $T_i$ be the tree rooted at $v_i$, $1 \le i\le m$.
  We build $m+1$ data structures of \autoref{lem:pathbalancedtreemiddle} on $T_0, \cdots, T_m$
  and then we recurse on each of the $m+1$ trees.
  The recursion stops once we reach subproblems on trees of height at most $h_1$.

Since the data structure of \autoref{lem:pathbalancedtreemiddle} uses $O(n)$ space, 
at each recursive level, the total space usage of data structures we constructed is $O(n)$.
Over the $O(\log(h_2/h_1))$ recursion levels, this sums up to $O(n\log(h_2/h_1))$ space. 

Now we analyze the query time. 
Given a query $(q, \pi)$, we may query several data structures that together cover the whole path of $\pi$. 
Let $w$ be the highest vertex on $\pi$. 
We can decompose $\pi$ into two disjoint parts $\pi_1$ and $\pi_2$,  that start from $w$ and
end at vertices $u_1$ and $u_2$ respectively, with $u_1$ and $u_2$ being descendants of $w$.
It thus suffices to only answer $\pi_1$, as the other path can be answered similarly. 
The first observation is that we can find a series of data structures that can be used
to answer disjoint parts of $\pi_1$.
The second observation is that we can afford to make the path a bit longer to truncate the recursion.
We now describe the details.

Consider the trees $T_0, \cdots, T_m$ defined at the top level of the recursion.
If $\pi_1$ is entirely contained in one of the trees, then we recurse on that tree.
Otherwise, $w$ is contained in $T_0$ and $u_1$ is contained in some subtree $T_i$.
Now, $\pi_1$ can be further subdivided into two smaller ``anchored'' paths: 
one from $w$ to $v_i$ (``anchored'' at $w$) 
and another from $v_i$ to $u_1$ (``anchored'' at $v_i$)
and each smaller path can be answered recursively in the corresponding tree.
Thus, it suffices to consider answering the query $q$ along an anchored path.

Thus, consider the case of answering an anchored path $\pi'$ in the data structure.
To reduce the notation and clutter, assume $\pi'$ is an anchored path, starting from the root of $T$ and ending
at a vertex $u$. 
Assume the vertices $v_1, \cdots, v_m$ and trees $T_0, \cdots, T_m$ are defined as above.
First, consider the case when the height of $T$ is at most $h_1$;
in this case, we have built an instance of the data structure of \autoref{lem:pathbalancedtreemiddle} on $T$
but not on the trees $T_0, \cdots, T_m$.
In this case, we simply answer $q$ on a root of leaf path in $T$ that includes $\pi'$, e.g., by picking
a leaf in the subtree of $u$. 
In this case, we will be performing a number of ``useless'' point location queries, in particular
those on the descendants of $u$.
However, as the height of $T$ is at most $h_1$, it follows that the query bound stays asymptotically
the same: $O(\sqrt{h_1} \log n)$. 
Furthermore, there is no recursion in this case and thus this cost is paid only once per anchored path.
The second case is when the height of $T$ is greater than $h_1$.
In this case, if $u$ lies in $T_0$ we simply recurse on $T_0$ but if $u$ lies in a tree
$T_i$, we first query the data structure of \autoref{lem:pathbalancedtreemiddle} using the path
from the root of $T$ until $v_i$, and then we recurse on $T_i$.
As a result, answering the anchored path query reduces to answering at most one query 
on an instance of data structure~\autoref{lem:pathbalancedtreemiddle} and another recursive ``anchored''
on a tree of half the height.
Thus, the $i$-th instance of the data structure~\autoref{lem:pathbalancedtreemiddle} that we
query covers at most $1/2^i$ fraction of the anchored path. 
Thus, if $k$ is the length of the anchored path, it follows that the
total query time of all the data structures we query is bounded by
$$
\sum_{i=1}^{\infty}\frac{\sqrt{k}\log n}{2^i}=O(\sqrt{k}\log n)=O(\sqrt{|\pi|}\log n).
$$
\end{proof}

\ignore{
We now reduce the space of the above lemma dramatically. 
Let $\log^* n$ be the iterated $\log$ function, i.e., the number of times
we need to apply $\log$ function to $n$ until we reach 2.
Let $\log^{*(i)}n$ be the application of $\log^*$ function $i$ times, i.e.,
$\log^{*(i)}n = \overbrace{\log^*(\log^*(\cdots \log^*(n)\cdots)}^{i} $.
Finally, define $\alpha_5(n)$ to be the value $i$ such that
$\log^{*(i)}n = 2$, i.e., how many times we need to apply the $\log^*$ function
until we reach a constant.
}

We now reduce the space of the above lemma dramatically. 
We will repeatedly use a ``bootstrapped" data structure. 
The following lemma establishes how we can bootstrap a base data structure 
to obtain a more efficient one.

\begin{lemma}
\label{lem:pathtreetworangesbootstrap}
  Consider a catalog tree of height $h$, $\frac{\log n}{2} < h \le\frac{\log^2{n}}{2}$, 
  in which each vertex is associated with a planar subdivision.
  Let $n$ be the total complexity of the subdivisions. 
  Assume, for any fixed value $\Delta$, $\omega(1) \le \Delta \le \log n$, we can build a ``base'' data structure 
  that can answer a 2D OFC query $(q,\pi)$ in $Q_b(n) = O(\sqrt{|\pi|}\log n)$ time as long as 
  $\pi$ is path of length between $\frac{\log^2 n}{2\Delta}$ and $\frac{\log^2 n}{2}$. 
  Furthermore, assume it uses $S_b(\Delta,n) = O(nf(\Delta))$ space, for some function $f$ which 
  is monotone increasing in $\Delta$ and for $\Delta = \omega(1)$ we have
  $f(\Delta) = \omega(1)$.

  Then, for any given fixed value $\Delta$, $\omega(1) \le \Delta \le \log n$, we can build a
  ``bootstrapped'' data structure 
  that can answer a 2D OFC query $(q,\pi)$ in $Q_b(n) +  O(\sqrt{|\pi|}\log n)$ time as long as 
  $\pi$ is path of length between $\frac{\log^2 n}{2\Delta}$ and $\frac{\log^2 n}{2}$. 
  Furthermore, it uses $O(n f^*(\Delta))$ space, where $f^*(\cdot)$ is the iterative $f(\cdot)$ function
  which denotes how many times we need to apply $f(\cdot)$ function to $\Delta$ to reach a constant value.
\end{lemma}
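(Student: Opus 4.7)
The plan is to build $B(\Delta)$ as $k := f^*(\Delta)$ layered sub-structures, each of which uses $O(n)$ space and handles one slice of the target path-length range. Define $\Delta_0 = \Delta$ and $\Delta_{i+1} = f(\Delta_i)$, so that $\Delta_k$ is constant by the definition of $f^*$, and partition $[\log^2 n/(2\Delta),\,\log^2 n/2]$ into sub-ranges $L_i = [\log^2 n/(2\Delta_i),\,\log^2 n/(2\Delta_{i+1})]$ for $i = 0, \ldots, k-1$; these together cover the entire target range, and at query time I route each $(q,\pi)$ to the unique $i$ with $|\pi| \in L_i$.

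At level $i$, set $s_i = \Delta_{i+1} = f(\Delta_i)$. For every subdivision $A_v$ of complexity $n_v$, apply \autoref{cor:orth-sen-cutting} to obtain an intersection-sensitive $(s_i/n_v)$-cutting of $O(n_v/s_i)$ axis-aligned rectangular cells, each with a conflict list of size $O(s_i)$, and attach an optimal single-subdivision point-location structure to every conflict list; this costs $O(n)$ space in total. Viewing each cutting as a coarser planar subdivision, the whole family forms a reduced 2D OFC instance on the same catalog tree with total complexity $O(n/s_i)$. On this reduced instance I invoke the hypothesized base data structure with parameter $\Delta_i$, which costs $O((n/s_i)\,f(\Delta_i)) = O(n)$ space by the choice $s_i = f(\Delta_i)$. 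Summing over all $k$ levels gives the advertised $O(n f^*(\Delta))$ total space.

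To answer $(q,\pi)$ at level $i$, I first query the base structure on the reduced instance to locate, for every vertex on $\pi$, the cutting cell containing $q$; this costs $Q_b(n/s_i) \le Q_b(n) = O(\sqrt{|\pi|}\log n)$. Then a single $O(\log s_i)$ point-location on each returned cell's conflict list recovers the original region, adding $O(|\pi|\log s_i)$ time.

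The main obstacle is verifying that the $O(|\pi|\log s_i)$ overhead is absorbed into $O(\sqrt{|\pi|}\log n)$. Because $|\pi| \le \log^2 n / (2 s_i)$ on $L_i$, we have $\sqrt{|\pi|} \le \log n / \sqrt{2 s_i}$, so $|\pi|\log s_i \le \sqrt{|\pi|}\cdot \log n \cdot (\log s_i / \sqrt{2 s_i})$, and since $\log s_i = O(\sqrt{s_i})$ for all $s_i \ge 1$, the parenthetical factor is $O(1)$; this yields the required $Q_b(n) + O(\sqrt{|\pi|}\log n)$ query time. A parallel computation confirms that $L_i$ still lies inside the base structure's operating window $[\log^2(n/s_i)/(2\Delta_i),\,\log^2(n/s_i)/2]$ on the reduced instance, since $s_i \le \log n$ gives $\log(n/s_i) = \log n - O(\log\log n)$, so the two endpoints differ from those computed with $\log n$ by only lower-order terms. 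The borderline case $f(\Delta) = O(1)$ is trivial, as then the base DS already uses $O(n) = O(n f^*(\Delta))$ space directly, completing the bootstrap.
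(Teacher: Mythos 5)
Your construction is correct and is essentially the paper's own bootstrap: at each of the $f^*(\Delta)$ rounds you build an intersection-sensitive cutting whose conflict-list size is the next iterate of $f$, store its cells in the base structure with parameter the current iterate so that each round costs $O(n)$ space, and route a query to the round where the $O(|\pi|\log s_i)$ conflict-list point-location overhead is absorbed into $O(\sqrt{|\pi|}\log n)$ --- the paper does the same, only with the slightly sharper per-round cap $\tfrac12(\log n/\log f^{(i)}(\Delta))^2$ in place of your more conservative $\log^2 n/(2f^{(i)}(\Delta))$, which changes nothing asymptotically. The one loose end, shared with the paper's own write-up, is the sliver of path lengths just below $\log^2 n/2$ once the iterate reaches its terminal constant; it is handled by noting that the last round's $O(1)$-size conflict lists give overhead $O(|\pi|)=O(\sqrt{|\pi|}\log n)$ for every $|\pi|\le\log^2 n/2$.
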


\begin{proof}
  We construct an intersection sensitive $(f(\Delta)/n_i)$-cutting $C_i$
  for each planar subdivision $A_i$ attached to the tree. 
  Call these the ``first level'' cuttings. 
  Similar to the analysis in \autoref{lem:shorttreepath}, we obtain $O(n_i/f(\Delta))$ cells, 
  which are disjoint axis-aligned rectangles, for each $C_i$ and thus $n'= O(n/f(\Delta))$ cells in total.
  Each cell in the cutting has a conflict list of size $O(f(\Delta))$ and on that we build a point
  location data structure.
  This takes $O(n)$ space in total.
  We store the cells of the cutting in an instance of the base data structure with parameter $\Delta$.
  Call this data structure $\A_1$.
  The space usage of $\A_1$ is 
  \begin{align*}
    S_b(\Delta, n') = O(n'f(\Delta)) = O(n).
  \end{align*}

  Now we consider a query $(q,\pi)$.
  Let $\delta_1= \log(f(\Delta))$.
  Consider the case when $\frac{1}{2}\cdot\left( \frac{\log^2 n}{\Delta} \right) \le
  |\pi| \le  \frac{1}{2}\cdot\left( \frac{\log n}{\delta_1} \right)^2$.
  In this case, as $\A_1$ is built with parameter $\Delta$, we can query it with $(q,\pi)$.  
  Thus, in $Q_b(n)$ time, for every subdivision on path $\pi$, we find the cell of the cutting that contains
  $q$.
  Then, we use the point location data structure on the conflict lists of the
  cells to find the original rectangle containing
  $q$.
  This takes an additional $O(\log(f(\Delta)))$ as the size of each conflict is $O(f(\Delta))$.
  Thus, the query time in this case is
  \begin{align*}
    Q_b(n) + O(|\pi|\log(f(\Delta))) = Q_b(n) + O(\sqrt{|\pi|}\log n)
  \end{align*}
  since we have  $|\pi|\le  \frac{1}{2}\cdot\left( \frac{\log n}{\delta_1} \right)^2 = \frac{1}{2}\cdot \left( \frac{\log n}{\log(f(\Delta))} \right)^2$.

  Thus, the only paths we cannot answer yet are those when 
  $\frac{1}{2}\cdot\left( \frac{\log n}{\delta_1} \right)^2 \le |\pi| \le \frac{\log^2 n}{2}$.
  In this case, we can bootstrap. 
  First, observe that we can build a data structure $\A'$ on the 
  the original rectangles, where $\A'$ is an instance of the base data
  structure but this time with  parameter $\Delta$ set to  $\delta_1^2$.
  \ignore{
  This will take $S_b(\delta_1^2,n) = O(n f(\delta_1^2,n))$ space. 
  Thus, the total space consumption is
  \begin{align}
    O(n) + S_b(\delta_1^2, n) = O(n) +  O(n f[ \log^2(f(\Delta)),n]) = O(n) + O(f(f(\Delta)))\label{eq:space2}
  \end{align}
  }
  This will take $S_b(\delta_1^2,n) = O(n f(\delta_1^2))$ space. 
  Thus, the total space consumption is
  \begin{align}
    O(n) + S_b(\delta_1^2, n) = O(n) +  O(n f(\log^2(f(\Delta)))) = O(n) + O(nf(f(\Delta)))\label{eq:space2}
  \end{align}
  where the last inequality follows since $f(\cdot)$ is a monotone increasing function and 
  $\log^2(f(\Delta)) < f(\Delta)$ as $f(\Delta) = \omega(1)$.
  By construction, the data structure $\A'$ is built to handle exactly paths of this length
  but it is using too much space.
  The idea here is that we can repeat the 
  previous technique using ``second level'' cuttings to obtain a data structure $\A_2$:
  for a subdivision of size $n_i$, build a $(f(f(\Delta))/n_i)$-cutting, called the ``second level'' cutting.
  By repeating the same idea we used for the first level cuttings, we can spend additional $O(n)$ space
  to build a data structure $\A_2$ which can answer queries $(q,\pi)$ as long as 
  $\frac{1}{2}\cdot \left( \frac{\log^2 n}{\Delta} \right) \le
  |\pi| \le  \frac{1}{2}\cdot \left( \frac{\log n}{\delta_2} \right)^2$
  where $\delta_2 = \log(f(f(\Delta)))$.
  By repeating this process for $f^*(\Delta)$ steps, we can obtain the claim data structure. 
\end{proof}

We will essentially begin with the data structure in \autoref{lem:pathtreetworanges} 
and use \autoref{lem:pathtreetworangesbootstrap} to bootstrap.
To facilitate the description, we define two useful functions first.
Let $\log^* n$ be the iterated $\log$ function, i.e., the number of times
we need to apply $\log$ function to $n$ until we reach 2.
We define $\log^{*(i)}n$ as follows
$$
\log^{*(i)}n=
\begin{cases}
\log^{*(i)}(\log^{*(i-1)}n)+1 & n>1,\\
0 & n\le1.\\
\end{cases}
$$
In other words, it is the number of times we need to apply $\log^{*(i-1)}$ function to $n$ until we reach 2.
We also defined the following function
$$
\tau(n)=\{\min i: \log^{*(i)}n \le 3\}.
$$

In fact, $\log^{*(i)}n$ is the $(i+2)$-th function of the inverse Ackermann hierarchy~\cite{cormen2009introduction} 
and $\tau(n)=\alpha(n)-3$, where $\alpha(n)$ the inverse Ackermann function~\cite{cormen2009introduction}.

  \begin{lemma}
  \label{lem:pathtree}
    Consider a catalog tree of height $h$, $\frac{\log n}{2} < h \le\frac{\log^2{n}}{2}$,
    in which each vertex is associated with a planar subdivision.
    Let $n$ be the total complexity of the subdivisions. 
    We can build a data structure using $O(n\alpha_c(n))$ space, where $c\geq{3}$ is any constant 
    and $\alpha_c(n)$ is the $c$-th function of the inverse Ackermann hierarchy,
    such that given any query $(q, \pi)$, where $q$ is a query point and $\pi$ is a path 
    of length $|\pi|$, $\frac{\log n}{2} < |\pi| \le\frac{\log^2{n}}{2}$,
    all regions containing $q$ along $\pi$ can be reported in time $O(\sqrt{|\pi|}\log{n})$.
    Furthermore, we can also build a data structure using $O(n)$ space 
    answering queries in time $O(\alpha(n)\sqrt{|\pi|}\log{n})$, 
    where $\alpha(n)$ is the inverse Ackermann function.
  \end{lemma}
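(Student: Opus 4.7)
The plan is to combine \autoref{lem:pathtreetworanges} (as the initial base data structure) with \autoref{lem:pathtreetworangesbootstrap} (the bootstrap step), applying the latter iteratively. First I would instantiate \autoref{lem:pathtreetworanges} with $h_1=\log^2 n/(2\Delta)$ and $h_2=\log^2 n/2$, which handles every path length in the target range as long as $\Delta\ge\log n$. The resulting space is $O(n\log(h_2/h_1))=O(n\log\Delta)$; that is, $S_b(\Delta,n)=O(n f_0(\Delta))$ with $f_0(\Delta)=\log\Delta$, and the query time is $O(\sqrt{|\pi|}\log n)$. This base satisfies the hypotheses of \autoref{lem:pathtreetworangesbootstrap} since $f_0(\Delta)=\omega(1)$ whenever $\Delta=\omega(1)$.

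Next I would invoke \autoref{lem:pathtreetworangesbootstrap} repeatedly, using the data structure produced by each round as the new base for the next one. A single application replaces the space function $f_i$ with $f_{i+1}=f_i^*$ while adding an extra additive $O(\sqrt{|\pi|}\log n)$ term to the query time, and one must also confirm that the resulting $f_{i+1}$ still satisfies $f_{i+1}(\Delta)=\omega(1)$ for $\Delta=\omega(1)$, which is immediate since every level of the inverse Ackermann hierarchy diverges. Setting $\Delta=\log n$ throughout, the sequence $f_i(\log n)$ tracks the hierarchy used in the paper: $f_0(\log n)=\log\log n$, $f_1(\log n)=O(\log^* n)=O(\alpha_3(n))$, and in general $f_k(\log n)=O(\alpha_{k+2}(n))$.

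From here, the two bounds in the statement follow by choosing the number of bootstrapping rounds appropriately. For any constant $c\ge 3$, exactly $c-2$ rounds give space $O(n\alpha_c(n))$; since the number of rounds is constant, the additive $O(\sqrt{|\pi|}\log n)$ overheads merge into a single $O(\sqrt{|\pi|}\log n)$ query bound. For the $O(n)$-space variant I would keep bootstrapping until $f_k(\log n)$ becomes a fixed constant, which by the paper's definition of $\alpha(n)$ (and of $\tau(n)$) happens after $k=\Theta(\alpha(n))$ rounds; the query time then accumulates to $O(\alpha(n)\sqrt{|\pi|}\log n)$ while the space collapses to $O(n)$.

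The main obstacle I anticipate is the careful bookkeeping that identifies $f_k(\log n)$ with the inverse Ackermann hierarchy index $\alpha_{k+2}(n)$ under the paper's convention $\alpha_i=\alpha_{i-1}^*$, and verifying inductively that at each bootstrapping step the previous data structure genuinely satisfies the hypotheses of \autoref{lem:pathtreetworangesbootstrap} (in particular that the new $f$ is monotone increasing and unbounded on super-constant inputs, and that the range of admissible path lengths remains exactly $\tfrac{\log^2 n}{2\Delta}\le|\pi|\le\tfrac{\log^2 n}{2}$ after each iteration). Once these indexing details are pinned down, the proof reduces to a clean induction on the number of bootstrap rounds.
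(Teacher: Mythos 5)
Your proposal is correct and follows essentially the same route as the paper: instantiate \autoref{lem:pathtreetworanges} (the paper uses $h_1=\frac{\log n}{2}$, $h_2=\frac{\log^2 n}{2}$, i.e.\ your parametrization with $\Delta=\log n$ and $f(\Delta)=\log\Delta$) and then apply \autoref{lem:pathtreetworangesbootstrap} repeatedly, replacing $f$ by $f^*$ each round, stopping after a constant number of rounds for the $O(n\alpha_c(n))$-space bound and after $\Theta(\alpha(n))$ rounds for the linear-space, $O(\alpha(n)\sqrt{|\pi|}\log n)$-query bound. Your indexing $f_k(\log n)=O(\alpha_{k+2}(n))$ matches the paper's accounting ($c$ rounds yield $O(n\alpha_{c+2}(n))$ space), so no gap remains.
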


\begin{proof}
By \autoref{lem:pathtreetworanges}, if we set $h_1=\frac{\log n}{2}$ and $h_2=\frac{\log^2 n}{2}$, 
we obtain a data structure using $O(n\log\log{n})$ answering queries in time $O(\sqrt{|\pi|}\log n)$.
By picking $\Delta=\log n$, $f=\log n$, we can apply \autoref{lem:pathtreetworangesbootstrap}
to reduce the space to $O(n\log^*(\log n))=O(n\log^*n)$ while achieving the same query time.
If we again pick $\Delta=\log n$, but $f=\log^* n$, by applying \autoref{lem:pathtreetworangesbootstrap} again,
the space is further reduced to $O(n\log^{**}(\log n))=O(n\log^{**}n)$.
We continue this process until $\log^{*(i)}n$ is less than three. Note that we will need to pay 
$O(\sqrt{|\pi|}\log{n})$ extra query time each time we apply \autoref{lem:pathtreetworangesbootstrap}.
We will end up with a linear-sized data structure with query time $O(\tau(n)\sqrt{|\pi|}\log{n})=O(\alpha(n)\sqrt{|\pi|}\log{n})$.
On the other hand, if we stop applying \autoref{lem:pathtreetworangesbootstrap} after a constant $c$ many rounds, 
we will end up with a $O(n\log^{*(c)}n)=O(n\alpha_{c+2}(n))$ sized data structure 
with the original $O(\sqrt{|\pi|}\log{n})$ query time.
\end{proof}

\ignore{
  We can also keep doing the bootstrapping until the space becomes linear, 
  but we need to pay an extra $\tau(|\pi|,n)$ term in the query time.

  \begin{lemma}
  \label{lem:pathtreemiddlesmallspace}
    Consider a catalog tree of height $h$ between $\Omega(\log n)$ and $o(\log^2 n)$ 
    in which each vertex is associated with a planar subdivision.
    Let $n$ be the total complexity of the subdivisions. 
    We can build a data structure using $O(n)$ space for any constant $c$ such that given any query
    $(q, \pi)$, where $q$ is a query point and $\pi$ is a path, 
    all regions containing $q$ along $\pi$ can be reported in time
    $O(\tau(|\pi|,n)\sqrt{|\pi|}\log{n})$, where $\tau(k,n)$.  
  \end{lemma}

  \begin{proof}
  We continue the bootstrapping until $\log^{\overbrace{*\cdots{*}}^i}n$ is below some constant for some $i$. 
  The space usage becomes $O(n)$. We now analyze the query time. 
  Suppose we are given a query path of length $\log^2n/t^2$, 
  apart from the base data structure in \autoref{lem:pathtreemiddle} which gives us $O(\log^2n/t)$ query time, 
  we need to perform several point locations on sampled rectangles. 
  Note that the extra query time we need to pay is $O(\log{(n/r)}\log^2n/t^2)$ 
  where $r$ is the number of cells we obtained from the cutting. 
  This may equal $\Theta(\log^2n/t)$ when $\log{(n/r)}=\Theta(t)$.

  Let us consider the cutting we construct at each round of bootstrapping. 
  We begin with the cutting with the smallest size at this round and then gradually increase it. 
  Also, the beginning cutting size increases every time we start a new round. 
  This implies before the beginning cutting size reaches $n/t$ at some round, 
  we always risk paying $\Theta(\log^2n/t)$ to do point locations on samples at each round. 
  By our definition of $\tau(k,n)$, $\tau(\log^2n/t^2,n)$ is the round we are looking for.  
  After that round, $\log{(n/r)}$ shrinks by more than a constant, 
  so they only contribute to a constant factor of the extra query time thereafter.

  \end{proof}
}

\subsubsection{The Lower Bound}
We show an almost matching lower bound in this section.

\begin{lemma}
\label{lem:pathtreelower}
Assume, given any catalog tree of height $h$, $\frac{\log n}{2} < h \le\frac{\log^2{n}}{2}$,
in which each vertex is associated with a planar subdivision
  with $n$ being the total complexity of the subdivisions, we can build a data structure that satisfies the following:
  it uses at most $n 2^{\varepsilon \log n / \sqrt h}$ space, for a small enough constant $\varepsilon$, and 
  it can answer 2D OFC queries $(q,\pi)$. 
  Then, its query time must be $\Omega(\sqrt{|\pi|}\log{n})$.
\end{lemma}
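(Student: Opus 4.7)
The plan is to follow the strategy of Lemma~\ref{lem:shortpathtreelower}: construct a hard instance of 3D rectangle stabbing, reduce it to a 2D OFC query on a suitable catalog tree, and apply Theorem~\ref{thm:framework}. The central technical obstacle is that for $h > \log n/2$, a balanced binary tree of height $h$ would have $2^h > n$ vertices, so the hard instance cannot be a single balanced binary tree as in Lemma~\ref{lem:shortpathtreelower}.

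First I would choose the catalog tree to consist of a balanced binary subtree $T^*$ of height $h^* = \Theta(\log(n/h))$ at the top, with a single path of length $h - h^*$ hanging off each of the $2^{h^*}$ leaves of $T^*$. The total vertex count is $O(2^{h^*} \cdot h) = O(n)$, so the subdivisions can carry total complexity $n$. A root-to-leaf path in this tree has length exactly $h$, which will be the hard query. On $T^*$, I would apply the construction of Lemma~\ref{lem:shortpathtreelower} with a parameter $r_1$: rectangles at level $i$ have $z$-depth $1/2^i$, width $1/K^j$, and height chosen to keep the volume constant. On each hanging path, each vertex's 2D subdivision uses axis-aligned rectangles whose widths and heights vary across the path positions in an analogous pattern, and these 2D rectangles are lifted to 3D using the $z$-range of the corresponding leaf of $T^*$, so all vertices on the same path inherit the same $z$-slab of thickness $1/2^{h^*}$.

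With this construction, every root-to-leaf query hits exactly $h$ rectangles ($h^*$ from $T^*$ plus $h - h^*$ from a single hanging path). I would bound the intersection measure $v$ by case analysis: two rectangles both in $T^*$ give $v_1 \le V_1/2^{r_1}$ by the same argument as in Lemma~\ref{lem:shortpathtreelower}; two rectangles on the same hanging path give $v_2 \le V_2/2^{r_2}$ times the leaf's $z$-thickness $1/2^{h^*}$; two rectangles whose $z$-ranges are disjoint (rectangles on different paths, or incompatible levels of $T^*$) contribute zero. The hard part will be to balance $r_1$, $r_2$, $h^*$, and the allocation of rectangles between $T^*$ and the paths so that both $v_1$ and $v_2$ are $O(h/(n \cdot 2^{\Omega(\log n/\sqrt h)}))$, while respecting the ``fits in the unit cube'' dimensional constraint (for the top construction, $r_1^2 + h^* + O(\log h^*) \le \log n_{T^*}$, and a similar inequality must hold for the path construction).

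Finally, setting $\gamma = \delta \log n/\sqrt h$ for a sufficiently small constant $\delta$, Theorem~\ref{thm:framework} yields a space lower bound of $\Omega(n \cdot 2^{\Omega(\log n/\sqrt h)})$. For $\varepsilon$ chosen small enough (in particular smaller than $\delta$ times the hidden constant in the $\Omega$), this contradicts the assumed space bound $n \cdot 2^{\varepsilon \log n/\sqrt h}$. Thus $\gamma$ cannot be so small, and the query time must satisfy $Q(n) \ge \gamma \cdot h = \Omega(\sqrt{h}\log n) = \Omega(\sqrt{|\pi|}\log n)$.
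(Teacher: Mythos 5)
There is a genuine gap, and it sits exactly where you flag ``the hard part'': the claim that two rectangles on the same hanging path can be made to satisfy $v_2\le (A/2^{r_2})\cdot 2^{-h^*}$ (writing $A$ for the common 2D area). Along a hanging path all $\Theta(h)$ vertices share one $z$-slab, so the $z$-coordinate gives zero separation and the entire bound must come from the planar shapes, which all have the same area $A$ (fixed volume $V$ divided by the fixed slab thickness). But for two tilings of the unit square by area-$A$ axis-aligned rectangles, if a rectangle $R$ of one tiling is covered only by rectangles of the other tiling whose widths are within a factor $C$ of $w(R)$, then at most $O(C^2)$ of them can meet $R$ (a grid/pigeonhole argument), so some pair overlaps in area $\Omega(A/C^2)$; hence a relative overlap of $2^{-r_2}$ forces width scales differing by $2^{\Omega(r_2)}$ between levels. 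Widths lie in $[A,1]$, so only $O\bigl(\log (1/A)/r_2\bigr)=O(\log n/r_2)=O(\sqrt h)$ mutually $2^{\Omega(r_2)}$-separated scales exist, while a hanging path has $\Theta(h)\gg\sqrt h$ levels that must be pairwise separated. No choice of $r_1,r_2,h^*$ or reallocation of complexity fixes this: with your stated $h^*=\Theta(\log(n/h))$ the situation is even worse, since each path vertex then carries only $O(1)$ rectangles of area $\Theta(1)$, so already two adjacent path levels contain a pair overlapping in a constant fraction of $A$, giving $v=\Omega(V)$ and only a trivial $\Omega(n)$ space bound from \autoref{thm:framework} (condition (ii) requires a uniform $v$ over \emph{all} pairs, so one bad pair kills the argument).

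The fix is structural, and it is what the paper's proof of \autoref{lem:pathtreelower} does differently from your ``all branching on top, long paths below'' tree: it interleaves. Its hard tree consists of $\sqrt h/2$ blocks, each made of $\frac{\log n}{4\sqrt h}$ binary-branching levels followed by a unary stretch of only $\sqrt h$ levels; within a block the width shrinks by $K=2^r$ per level (with $r=\frac{\log n}{4\sqrt h}$) and resets at the next block, while the $z$-extent halves at each branching level and stays constant along the unary stretch. Thus same-block pairs are separated by width (only $\frac{\log n}{4\sqrt h}+\sqrt h$ scales are ever needed, fitting the budget), and different-block pairs are separated by $z$-extent, yielding the uniform bound $v=V/2^r$; the remainder of your argument (reduction to 2D OFC on the tree, the fits-in-the-unit-cube constraint, and applying \autoref{thm:framework} with $\gamma=\delta\log n/\sqrt h$ to contradict the assumed space) then goes through as you describe and matches the paper.
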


\begin{proof}
  We first describe a hard input instance for a 3D rectangle stabbing problem and
  later we show that this can be embedded as an instance of 2D OFC problem on
  a tree of height $h$.
  Also, we actually describe a tree of height $h' = \frac{h + (\log n)/4}{2} \le h$. 
  This is not an issue as we can add dummy vertices to the root to get the height to exactly $h$.

  We begin by describing the set of rectangles. 
  Each rectangle is assigned a ``class number'' and a ``group number''. 
  The number of classes is $\frac{\sqrt{h}}{2}$ and the number of groups is 
  $\frac{\log{n}}{4\sqrt{h}}+\sqrt{h}$.
  The rectangles with the same class number and group number will be disjoint, isometric and 
  they would tile the unit cube. 
Rectangles with class number $i=0,1,\cdots,\frac{\sqrt{h}}{2} - 1$ and 
group number $j=0,1,\cdots,\frac{\log{n}}{4\sqrt{h}}-1$ will be of shape
$$
[\frac{1}{K^j}\times{K^j}\cdot{2^{ir+j}}\cdot{V}\times{\frac{1}{2^{ir+j}}}],
$$
where $V, K$ are some parameters to be determined later and $r=\frac{\log n}{4\sqrt{h}}$.
Similarly, rectangles with class number $i=0,1,\cdots,\frac{\sqrt{h}}{2} -1$ and group number
$j=\frac{\log{n}}{4\sqrt{h}},\cdots,\frac{\log{n}}{4\sqrt{h}}+\sqrt{h}-1$ will be of shape
$$
[\frac{1}{K^j}\times{K^j}\cdot{2^{ir+r}}\cdot{V}\times{\frac{1}{2^{ir+r}}}].
$$
The total number of different shapes is $\frac{\sqrt{h}}{2}\cdot ( \frac{\log{n}}{4\sqrt{h}}+\sqrt{h}) = h'$.
Note that each rectangle has volume $V$, so the total number of rectangles we use in all the tilings is $n$ by setting $V=h'/n$. 
By our construction any query point is contained in $t=h'=|\pi|$ rectangles. 
Now we analyze the maximal intersection volume of two rectangles. 
By the same argument as in the proof of \autoref{lem:shortpathtreelower} 
the maximal intersection volume can only be achieved by two rectangles when they are 
in the same class and adjacent groups or in the same group of adjacent classes. 
For two rectangles $R_1$ and $R_2$ in group $j_1, j_2$ of class $i_1, i_2$, we have

$$
V(R_1\cap{R_2})=
\begin{cases}
\frac{V}{2K} & i_1=i_2 \textrm{ and } j_1=j_2+1\leq{\frac{\log{n}}{4\sqrt{h}}},\\
\frac{V}{K} & i_1=i_2 \textrm{ and }\frac{\log{n}}{4\sqrt{h}}\leq{j_1}=j_2+1,\\
\frac{V}{2^r} & i_1=i_2+1 \textrm{ and } j_1=j_2.\\
\end{cases}
$$

We set $K=2^r$, then the intersection of any two rectangle is no more than $v=V/2^r$. 

We also need to make sure no side length of any rectangle exceeds the side length of the unit cube. 
The maximum side length can only be obtained 
when $i=\frac{\sqrt{h}}{2} - 1$ and $j=\frac{\log{n}}{4\sqrt{h}}+\sqrt{h}-1$ in the second dimension. 
We must have
$$
K^{\frac{\log{n}}{4\sqrt{h}}+\sqrt{h}-1} \cdot 2^{r(\frac{\sqrt{h}}{2} - 1)+\frac{\log{n}}{4\sqrt{h}}} \cdot V \le 1
$$
Plugging $K=2^r$ and $V=h'/n$ in, we must have
\begin{align}
2^{r\frac{\log{n}}{4\sqrt{h}}+r(\sqrt{h}-1)} \cdot 2^{r(\frac{\sqrt{h}}{2} - 1)+\frac{\log{n}}{4\sqrt{h}}} \cdot h' \le n \label{eq:con4.3}
\end{align}
Since $\frac{\log n}{2} \le h \le \frac{\log^2 n}{2}$ and $r=\frac{\log{n}}{4\sqrt{h}}$, (\ref{eq:con4.3}) holds.

Suppose $\gamma h \ge Q(n)$, then the first condition of \autoref{thm:framework} is satisfied and 
we get the lower bound of
$$
S(n)=\Omega\left(\frac{tv^{-1}}{2^{O(\gamma)}}\right)=\Omega\left(\frac{n2^r}{2^{O(\gamma)}}\right).
$$

Observe that by setting $\gamma=\frac{\delta\log n}{\sqrt h}$ for a sufficiently small $\delta>0$, 
the data structure must use $\Omega(n2^{\Omega(\log n / \sqrt h)})$ space, which contradicts the space usage in our theorem. 
Therefore, $Q(n) \ge \gamma h = \frac{\delta\log n}{\sqrt h} h=\Omega(\sqrt{h}\log{n})=\Omega(\sqrt{|\pi|}\log{n})$.
\begin{figure}[h]
  \centering
  \includegraphics[scale=1]{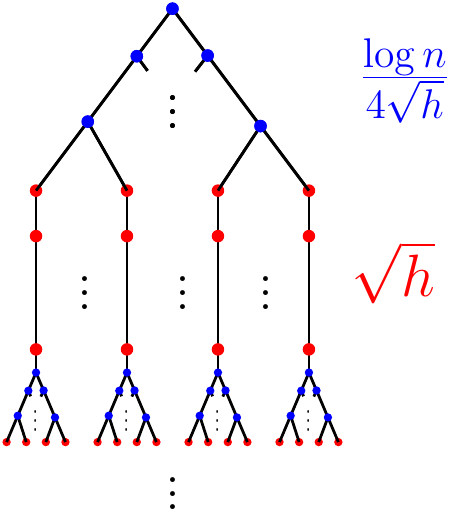}
  \caption{A difficult tree for fractional cascading.}
  \label{fig:lbtree}
\end{figure}
It remains to show that this set of rectangles can actually be embedded into an instance of the 2D OFC problem.
To do that, we describe the tree $T$ that can be used for this embedding. 
See Figure~\ref{fig:lbtree}.
We hold the convention that the root of $T$ has depth $0$.
Starting from the root, until depth $\frac{\log n}{4\sqrt{h}}$, every vertex will have two children
(blue vertices in Figure~\ref{fig:lbtree}) then we will have $\sqrt{h}$ vertices with one child
(red vertices in Figure~\ref{fig:lbtree}).
Then this pattern continues for $\frac{\sqrt{h}}{2}$ steps.
The first set of blue and red vertices correspond to class $0$, the next to class $1$ and so on. 
Within each class, the top level corresponds to group $0$ and so on.
To be specific, vertices at depth $(\frac{\log{n}}{4\sqrt{h}} + \sqrt{h})i + j$ of the tree 
have rectangles of class $i$ and group $j$. 
Now, it can be seen that the rectangles can be assigned to the vertices of $T$, 
similar to how it was done in \autoref{lem:shortpathtreelower}.
The notable difference here is that the depth (length of the side parallel to the $z$-axis) 
of the rectangles decreases as the group number
increases from $0$ to $\frac{\log{n}}{4\sqrt{h}}-1$ but then it stays the same from 
$\frac{\log{n}}{4\sqrt{h}}$ until $\frac{\log{n}}{4\sqrt{h}}+\sqrt{h}-1$.
This exactly corresponds to the structure of the tree $T$.
\end{proof}

\subsection{Trees of height $h >\frac{\log^{2}n}{2}$}

For trees of this height, we have: 

\begin{lemma}
\label{lem:pathtreelong}
  Consider a catalog tree of height $h > \frac{\log^2n}{2}$ in which each vertex is associated with a planar subdivision.
  Let $n$ be the total complexity of the subdivisions. 
  We can build a data structure using $O(n)$ space such that given any query
  $(q, \pi)$, where $q$ is a query point and $\pi$ is a path of length $|\pi|>\frac{\log^2n}{2}$, 
  all regions containing $q$ along $\pi$ can be reported in time $O(|\pi|)$.  
\end{lemma}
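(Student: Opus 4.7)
The plan is to exploit the fact that $|\pi| > \frac{\log^2 n}{2}$ gives us an $O(\log^2 n)$ additive slack, which is exactly the total overhead that heavy path decomposition plus \autoref{thm:path} will incur. Concretely, I would apply a standard heavy path decomposition to the catalog tree $G$: at each internal vertex, designate as ``heavy'' the edge going to the child with the larger subtree (breaking ties arbitrarily), and group maximal chains of heavy edges into heavy paths. Since $G$ is binary, every time the traversal of a root-to-leaf path leaves a heavy path via a light edge, the subtree size drops by at least a factor of two, so any such path meets only $O(\log n)$ heavy paths.

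On each heavy path $P$, let $n_P$ be the total subdivision complexity of the vertices along $P$. I would build the catalog-path data structure from \autoref{thm:path} on $P$, using $O(n_P)$ space and giving $O(\log n_P + |\pi'|)$ query time for any subpath $\pi' \subseteq P$. Because the heavy paths partition the vertex set of $G$, the total space is $\sum_P O(n_P) = O(n)$, as required.

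To answer a query $(q,\pi)$, I would split $\pi$ at its highest vertex (the LCA of its endpoints) into two monotone ``upward'' half-paths; following each half-path from its leaf endpoint toward the LCA, the traversal switches heavy paths only when crossing a light edge. By the halving argument this happens $O(\log n)$ times per half-path, so $\pi$ decomposes into at most $O(\log n)$ contiguous subpaths $\pi_1,\dots,\pi_m$ with $m = O(\log n)$, each $\pi_i$ lying entirely inside a single heavy path. Querying the corresponding data structure for each $\pi_i$ takes $O(\log n + |\pi_i|)$ time, summing to
\[
\sum_{i=1}^{m}\bigl(\log n + |\pi_i|\bigr) \;=\; O(\log^2 n) + |\pi| \;=\; O(|\pi|),
\]
where the last equality uses the hypothesis $|\pi| > \frac{\log^2 n}{2}$.

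The main obstacle is mostly bookkeeping: one needs to verify that the decomposition of $\pi$ into heavy-path segments can actually be identified at query time in $O(\log n)$ time per segment (for example by storing, at each vertex, a pointer to the top of its heavy path and the heavy-path index), so that the navigational overhead does not dominate. This is standard, and everything else reduces to the clean packaging of \autoref{thm:path} plus the halving property of heavy path decomposition in binary trees.
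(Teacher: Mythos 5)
Your proposal is correct and follows essentially the same route as the paper: heavy path decomposition of the catalog tree, the catalog-path structure of \autoref{thm:path} on each heavy path, and the observation that any query path crosses $O(\log n)$ heavy paths, so the total cost is $O(\log^2 n + |\pi|) = O(|\pi|)$. Your write-up just spells out the bookkeeping (LCA split, per-segment navigation) that the paper leaves implicit.
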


\begin{proof}
We combine the classical heavy path decomposition by Sleator and Tarjan \cite{sleator1983data} and the data structure for catalog paths to achieve the desire query time. We first apply the heavy path decomposition to the tree and then for every heavy path created we build a 2D OFC data structure to answer queries along the path. Clearly, we only spend linear space in total. Then by the property of the heavy path decomposition, we only need to query $O(\log{n})$ heavy paths to answer a query, which leads to a query time of $O(\log^2{n}+|\pi|)=O(|\pi|)$.
\end{proof}


By combining \autoref{lem:shorttreepath}, \autoref{lem:pathtree}, \autoref{lem:pathtreelong}, 
we immediately get the following corollary.

\begin{corollary}
\label{cor:pathtree}
    Consider a catalog tree in which each vertex is associated with a planar subdivision.
    Let $n$ be the total complexity of the subdivisions. 
    We can build a data structure using $O(n\alpha_c(n))$ space, where $c\geq{3}$ is any constant 
    and $\alpha_c(n)$ is the $c$-th function of the inverse Ackermann hierarchy,
    such that given any query $(q, \pi)$, where $q$ is a query point and $\pi$ is a path, 
    all regions containing $q$ along $\pi$ can be reported in time 
    $O(\log{n}+|\pi|+\min\{|\pi|\sqrt{\log{n}},\sqrt{|\pi|}\log{n}\})$.
    Furthermore, we can also build a data structure using $O(n)$ space 
    answering queries in time $O(\log{n}+|\pi|+\min\{|\pi|\sqrt{\log{n}},\alpha(n)\sqrt{|\pi|}\log{n}\})$, 
    where $\alpha(n)$ is the inverse Ackermann function.
\end{corollary}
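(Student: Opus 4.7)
The plan is to build three separate data structures, each tuned to one of the three path-length regimes covered by Lemmas~\ref{lem:shorttreepath}, \ref{lem:pathtree}, and \ref{lem:pathtreelong}, and at query time to dispatch $(q,\pi)$ to the structure whose regime contains $|\pi|$. The workhorse is the forest-cutting trick already used inside the proof of Lemma~\ref{lem:pathtreetworanges}: if we delete every parent-edge whose child sits at a depth that is a positive multiple of $L$, then every subtree of the resulting forest has height at most $L$, and any query path $\pi$ of length at most $L$ spans a contiguous interval of depths of length at most $|\pi|\le L$ (realized via its highest vertex), so $\pi$ crosses at most one cut and therefore splits into at most two subpaths, each contained in a single subtree.

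First I would cut at depth multiples of $\tfrac{\log n}{2}$ and install the structure of Lemma~\ref{lem:shorttreepath} on each subtree, using $O(n)$ space in total; this services queries with $|\pi|\le\tfrac{\log n}{2}$ in $O(\log n+|\pi|\sqrt{\log n})$ time after the at-most-two-piece split. Next I would cut at depth multiples of $\tfrac{\log^2 n}{2}$ and install the structure of Lemma~\ref{lem:pathtree} on each subtree, in $O(n\alpha_c(n))$ or $O(n)$ space depending on which variant of Lemma~\ref{lem:pathtree} is used; this services medium queries $\tfrac{\log n}{2}<|\pi|\le\tfrac{\log^2 n}{2}$ in $O(\sqrt{|\pi|}\log n)$ or $O(\alpha(n)\sqrt{|\pi|}\log n)$ time. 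Finally, I would install the heavy-path-based data structure of Lemma~\ref{lem:pathtreelong} once on the full tree in $O(n)$ space, servicing $|\pi|>\tfrac{\log^2 n}{2}$ in $O(\log^2 n+|\pi|)=O(|\pi|)$ time.

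Summing the three space bounds yields the claimed $O(n\alpha_c(n))$ (respectively $O(n)$) total, and dispatching by $|\pi|$ yields the claimed query time after a small bookkeeping check: the two branches $|\pi|\sqrt{\log n}$ and $\sqrt{|\pi|}\log n$ coincide at $|\pi|=\log n$, so the first branch is the active minimum in the short regime, the second branch is the active minimum in the medium regime, and in the long regime the additive $|\pi|$ term in the stated bound already dominates both branches. There is no real obstacle. The only mild nuisance is that the coarser cut may leave some subtrees of height smaller than $\tfrac{\log n}{2}$, which violates the height precondition of Lemma~\ref{lem:pathtree}; but any medium-length path inside such a shallow subtree has length at most $\log n$, so it is essentially short and is already served correctly by the first data structure, and on those shallow subtrees we may simply skip the Lemma~\ref{lem:pathtree} installation.
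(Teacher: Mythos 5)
Your plan is essentially the paper's: the paper proves this corollary by a one-line combination of \autoref{lem:shorttreepath}, \autoref{lem:pathtree} and \autoref{lem:pathtreelong}, dispatching on $|\pi|$, and the forest-cutting device you invoke is exactly the one used inside the proof of \autoref{lem:pathtreetworanges}. Two small patches are needed in your write-up, though. First, a path of length at most $L$ can meet three trees of the cut forest, not two: both downward legs from its apex may cross the single cut level, so it splits into an apex piece plus two leg pieces; this is harmless, since each of the $O(1)$ pieces is answered separately (the paper makes the same two-piece simplification). Second, your closing bookkeeping about which branch of the min is active is off for the linear-space variant: for $\frac{\log n}{2}<|\pi|<\alpha^2(n)\log n$ the branch $|\pi|\sqrt{\log n}$ is still the smaller one, yet your dispatcher sends such queries to the \autoref{lem:pathtree} structure and only achieves $O(\alpha(n)\sqrt{|\pi|}\log n)$, an $\alpha(n)$-factor worse than the stated bound; relatedly, after the coarse cut a medium path may break into pieces shorter than $\frac{\log n}{2}$, which falls outside the stated length range of \autoref{lem:pathtree}. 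Both issues are repaired by noting that the construction of \autoref{lem:shorttreepath} never actually uses the height bound (alternatively, a path of length $O(\alpha^2(n)\log n)$ crosses only $O(\alpha^2(n))$ cut levels of the fine forest and so splits into $O(\alpha^2(n))$ pieces there), so every such path can be served in $O(\log n+|\pi|\sqrt{\log n})$ time; one should then dispatch each query, and each leftover short piece, to whichever of the two branches gives the smaller bound rather than at the fixed threshold $\frac{\log n}{2}$.
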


\section{Queries on Catalog Graphs and Subgraph Queries}

In this section, we consider general catalog graphs as well as subgraph 
queries on catalog trees.
Our result shows that it is possible to build a data structure of space
$O(n)$ such that we can save a $\sqrt{\log{n}}$ factor from the naïve query
time of iterative point locations. We also present a matching lower bound.

We begin by presenting a basic reduction. 
\begin{lemma}
\label{lem:graphtrans}
  Given a catalog graph $G$ of $m$ vertices with graph subdivision complexity $n$
  and maximum degree $d$, we can generate a new catalog graph $G'$ with 
  $\Theta(md)$ vertices with graph subdivision complexity $\Theta(dn)$ and bounded
  degree $O(d^2)$ such that the following holds: given any connected subgraph $\pi\subseteq{G}$, 
  in time $O(|\pi|)$, we can find a path $\pi'$ in $G'$  such that the answer to any query
  $Q_1=(q,\pi)$ in $G$ equals the answer to query $Q_2=(q, \pi')$ in $G'$.
\end{lemma}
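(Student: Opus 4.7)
The plan is to replace every vertex of $G$ with a clique ``gadget'' in $G'$ and every edge of $G$ with a complete bipartite block of cross-edges between the corresponding gadgets; this flexibility will let me lift the Euler tour of any spanning tree of $\pi$ to a simple path in $G'$.

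Construction of $G'$: for each $v \in V(G)$ of degree $d_v$, I create $d_v + 1$ copies $v_1, \ldots, v_{d_v + 1}$ connected as a clique, where each copy carries a replica of $v$'s planar subdivision; for each edge $\{u,v\} \in E(G)$, I add every cross-edge $v_i u_j$ (over all $i,j$). This yields $|V(G')| = \sum_v (d_v + 1) = \Theta(md)$, total subdivision complexity $\sum_v (d_v + 1)\, n_v = \Theta(dn)$, and degree of any $v_i$ bounded by $d_v + \sum_{u \sim v}(d_u + 1) \le d + d(d+1) = O(d^2)$.

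Query transformation: given a connected $\pi \subseteq G$, I compute a spanning tree $T$ of $\pi$ in $O(|\pi|)$ time by BFS or DFS and produce an Euler tour $w_0, w_1, \ldots, w_{2|\pi|-2}$ of $T$. Each vertex $v \in V(\pi)$ appears at most $d_T(v) + 1 \le d_v + 1$ times in the tour, so I map the $k$-th occurrence of $v$ to the fresh copy $v_k$; no copy is reused. Each consecutive pair of positions in the tour is a tree-edge $w_{t-1}w_t \in E(T) \subseteq E(G)$, and because every pair of copies in adjacent gadgets is linked in $G'$, the mapped pair $(w_{t-1})_a(w_t)_b$ is always a valid $G'$-edge regardless of $a,b$. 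Hence $\pi'$ is a simple path in $G'$ of length $2|\pi|-1 = O(|\pi|)$, produced in $O(|\pi|)$ total time.

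For correctness, $\pi'$ touches exactly the gadgets corresponding to the vertices of $\pi$, and every copy in a touched gadget carries the same replica of the underlying subdivision, so the iterative point location along $\pi'$ reports exactly the same set of faces as along $\pi$ (with per-face multiplicity at most $d+1$, trivially suppressed by a standard per-face marker in the reporting routine). The one subtle point and the natural source of error is the Euler-tour off-by-one at the root of $T$, which is visited $d_T(\mathrm{root})+1$ times; this is exactly why each gadget is built with $d_v+1$ rather than $d_v$ copies, though a cleaner alternative is to root $T$ at a leaf of $\pi$, capping every vertex's visit count at $d_v$ and letting $d_v$ copies suffice.
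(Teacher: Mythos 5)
Your proposal is correct and follows essentially the same route as the paper: replace each vertex by several interconnected copies carrying the same subdivision, cross-connect copies of adjacent vertices, and convert the (spanning tree of the) connected query subgraph into a DFS walk/Euler tour whose repeated visits are mapped to distinct copies, yielding a path of length $O(|\pi|)$ with the same answer. The only differences are cosmetic: you use $d_v+1$ copies per vertex (with the root off-by-one handled explicitly) where the paper uniformly uses $2d$ copies, which does not change the argument.
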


\begin{proof}
    The main idea is that we can add a number of dummy vertices to the graph such that 
    we can turn a subgraph query to a path query.

    We can obtain $G'$ in the following way.
    For every vertex in $G$, place $2d$ copies of the vertex in $G'$.
    All the copies of a vertex are connected in $G'$.
    Furthermore, every copy of a vertex $v_i$ is connected to every copy of vertex $v_j$
    if and only if $v_i$ and $v_j$ are connected in $G$.
    The maximum degree of $G'$ is thus $O(d^2)$.

    Now consider a subraph query $\pi$ in $G$. 
    By definition, $\pi$ is a connected subgraph of $G$ and w.l.o.g., we can assume
    $\pi$ is a tree. 
    We can form a walk $W$ from $\pi$ by following a DFS ordering of $\pi$ such that 
    $W$ traverses every edge of $\pi$ at most twice and visits every vertex of $\pi$.
    We observe that we can realize $W$ as a path in $G'$ by utilizing the dummy vertices;
    as each vertex has $2d$ dummy vertices, every visit to a vertex in $G$ can be replaced by a 
    visit to a distinct dummy vertex. 
\end{proof}

\subsection{The Upper Bound}

Formally, we have the following result.

\begin{lemma}
\label{thm:pathgraph}
  Consider a degree-bounded catalog graph in which each vertex is associated with a planar subdivision.
  Let $n$ be the total complexity of the subdivisions. 
  We can build a data structure using $O(n)$ space such that given any query
  $(q, \pi)$, where $q$ is a query point and $\pi$ is a path, all regions
  containing $q$ along $\pi$ can be reported in time
  $O(\log{n}+|\pi|\sqrt{\log{n}})$. 
\end{lemma}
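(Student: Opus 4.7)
The plan is to adapt \autoref{lem:shorttreepath} essentially unchanged, after observing that its construction is really about bounded-degree catalog graphs rather than trees: the tree structure is only used through the degree bound when counting how many length-$\le \log r$ subpaths pass through each vertex, and through the fact that a query path can be covered by consecutive short subpaths---both properties that hold for any catalog graph. So if I first reduce to a catalog graph of maximum degree $3$, the same construction and analysis will give a linear-space data structure with the desired $O(\log n + |\pi|\sqrt{\log n})$ query time.

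For the degree reduction, I would replace every vertex $v$ of degree $d$ (where $d$ is the constant degree bound of $G$) by a path gadget $v_1 - v_2 - \cdots - v_d$, routing the $i$-th original edge of $v$ through $v_i$ and placing a copy of $v$'s subdivision at every gadget vertex. The resulting graph $G'$ has maximum degree $3$ and subdivision complexity $O(n)$, and every simple path $\pi$ in $G$ induces a simple path $\pi'$ in $G'$ of length $O(|\pi|)$ by traversing, within each gadget, the contiguous sub-path between the entry and exit gadget vertices. Answering $(q,\pi')$ on $G'$ returns the same set of cells as $(q,\pi)$ on $G$ after trivially deduplicating the constant number of redundant copies produced per original vertex.

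On $G'$ I then replicate \autoref{lem:shorttreepath}: build an intersection-sensitive $(r^2/n_i)$-cutting for each subdivision via \autoref{cor:orth-sen-cutting}, build an $O(\log r)$-time point-location structure on each conflict list, and for every simple path of length at most $\log r$ in $G'$ build a $2$D rectangle-stabbing structure on the cells from its vertices via \autoref{lem:stabbing}. Since $G'$ has maximum degree $3$, each vertex lies on at most $O(r^{\log 3}\log r)$ such subpaths, so the total space is $O(n\log r/r^{2-\log 3}) = O(n)$. To answer a query, I lift $\pi$ to $\pi'$, partition $\pi'$ into $O(|\pi|/\log r)$ subpaths of length $\log r$, spend $O(\log n)$ per subpath on rectangle stabbing to identify the containing cells, and then pay an additional $O(\log^2 r)$ for point locations in the conflict lists; setting $r=2^{\sqrt{\log n}}$ balances these terms to $O(\log n + |\pi|\sqrt{\log n})$.

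The main obstacle I foresee is the degree reduction itself---specifically, ensuring that a simple path in $G$ maps to a simple path (rather than a walk) in $G'$ and that the duplicated copies of a subdivision inside a gadget do not inflate the output. The path gadget settles both: the intersection of any simple path with a gadget is inherently a contiguous sub-path, and all duplicated copies return the identical cell for any point-location query, so they can be collapsed on the fly. Everything else is a direct transcription of \autoref{lem:shorttreepath}, leveraging the fact that partitioning a path into short subpaths and counting the short subpaths through a fixed vertex are both properties that depend on the degree bound rather than on the catalog graph being acyclic.
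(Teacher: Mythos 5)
Your proposal is correct, but it gets around the degree issue by a different mechanism than the paper. The paper does no degree reduction at all: it runs the construction of \autoref{lem:shorttreepath} directly on the degree-$d$ graph and compensates for the larger count of short subpaths through a vertex (now $\Theta(r^{\log d}\log r)$ rather than $\Theta(r^{\log 3}\log r)$) by making the cuttings coarser, namely an intersection-sensitive $(r^{2\log d}/n_i)$-cutting per subdivision; then the stabbing structures still total $O\!\left(n\log r/r^{\log d}\right)=O(n)$ space, while point location in a conflict list of size $O(r^{2\log d})$ still costs $O(\log r)$ since $d$ is a constant, so the same balancing with $r=2^{\sqrt{\log n}}$ goes through. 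You instead keep \autoref{lem:shorttreepath} verbatim and push the constant factor into the graph: split each degree-$d$ vertex into a degree-$3$ path gadget with duplicated subdivisions, and map a simple query path of $G$ to a simple path of length $O(|\pi|)$ in $G'$; your observations that the intersection of a simple path with a gadget is contiguous and that duplicate copies report identical cells are exactly what is needed to make this sound, and your space/query accounting on the degree-$3$ graph matches the lemma's. The trade-off is that your route reuses the short-tree-path lemma as a black box at the cost of the (small) gadget-correctness argument and a constant-factor blowup in subdivision complexity, whereas the paper's route touches the construction's parameters but leaves the input graph untouched; both give $O(n)$ space and $O(\log n+|\pi|\sqrt{\log n})$ query time. (Incidentally, your gadget is in the same spirit as the paper's \autoref{lem:graphtrans}, which duplicates vertices for a different purpose, turning subgraph queries into path queries.)
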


\begin{proof}
We build the data structure essentially the same way as in the proof of \autoref{lem:shorttreepath}. 
The only difference is that the degree of a node is bounded by a $d\geq{2}$ which can be any constant. 
By the same argument in the proof of \autoref{lem:shorttreepath}, 
any node is stored at most $\Theta(r^{\log{d}}\log{r})$ times and we can obtain a $O(n)$ space bounded data structure achieving $O(\log{n}+|\pi|\sqrt{\log{n}})$ query time by creating an intersection sensitive $(r^{2\log{d}}/n_i)$-cutting for each planar subdivision $A_i$ and balancing the query time of cutting cells and conflict lists.
\end{proof}

By \autoref{lem:graphtrans}, we can also obtain the following two corollaries.
\begin{corollary}\label{cor:subgraph}
  Consider a catalog graph in which each vertex is associated with a planar subdivision.
  Let $n$ be the total complexity of the subdivisions. 
  We can build a data structure using $O(n)$ space such that given any query
  $(q, \pi)$, where $q$ is a query point and $\pi$ is a connected subgraph, 
  all regions containing $q$ along $\pi$ can be reported in time $O(\log{n}+|\pi|\sqrt{\log{n}})$.  
\end{corollary}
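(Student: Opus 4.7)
The plan is a direct composition of the two previous results: use Lemma~\ref{lem:graphtrans} to eliminate the subgraph aspect of the query, and then apply Lemma~\ref{thm:pathgraph} to the resulting path query on a (still degree-bounded) catalog graph.

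More concretely, I would start by invoking Lemma~\ref{lem:graphtrans} on the input catalog graph $G$. Since $G$ is degree-bounded, $d = O(1)$, so the produced catalog graph $G'$ has $\Theta(md) = \Theta(m)$ vertices, total subdivision complexity $\Theta(dn) = \Theta(n)$, and bounded degree $O(d^2) = O(1)$. I would then preprocess $G'$ using the data structure of Lemma~\ref{thm:pathgraph}; because $G'$ has subdivision complexity $\Theta(n)$ and bounded degree, this costs $O(n)$ space, matching the claimed space bound.

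For queries, given $(q,\pi)$ with $\pi$ a connected subgraph of $G$, I would use the reduction of Lemma~\ref{lem:graphtrans} to compute, in $O(|\pi|)$ time, a path $\pi'$ in $G'$ such that the answer to $(q,\pi)$ in $G$ equals the answer to $(q,\pi')$ in $G'$. Because the construction of $\pi'$ mimics a DFS traversal that uses every edge of $\pi$ at most twice and replaces each visit to a vertex by a visit to one of its $2d$ dummy copies, we have $|\pi'| = O(|\pi|)$. Then querying the data structure of Lemma~\ref{thm:pathgraph} on $(q,\pi')$ costs
\[
O(\log n + |\pi'|\sqrt{\log n}) \;=\; O(\log n + |\pi|\sqrt{\log n}),
\]
which dominates the $O(|\pi|)$ spent on building $\pi'$ and yields the stated bound.

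There is essentially no obstacle here beyond bookkeeping: the only mild subtlety is confirming that the constants hidden in $d$ do not inflate the space or query bounds asymptotically, which is immediate from the degree-bounded hypothesis on $G$. In particular, the same answer is reported for $(q,\pi)$ and $(q,\pi')$ because every original vertex of $\pi$ is visited by at least one of its copies in $\pi'$, and all copies of a vertex carry the same subdivision.
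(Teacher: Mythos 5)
Your proposal is correct and is essentially the paper's own argument: the paper obtains this corollary exactly by composing the reduction of \autoref{lem:graphtrans} (with $d=O(1)$, so $G'$ has subdivision complexity $\Theta(n)$ and bounded degree) with the path-query structure of \autoref{thm:pathgraph}, just as you do. Your additional bookkeeping about $|\pi'|=O(|\pi|)$ and the constant-degree blowup is exactly the (implicit) content of the paper's one-line derivation.
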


Specifically, for catalog trees we have the following:

\begin{corollary}\label{cor:subtree}
  Consider a catalog tree in which each vertex is associated with a planar subdivision.
  Let $n$ be the total complexity of the subdivisions. 
  We can build a data structure using $O(n)$ space such that given any query
  $(q, \pi)$, where $q$ is a query point and $\pi$ is a subtree, 
  all regions containing $q$ along $\pi$ can be reported in time $O(\log{n}+|\pi|\sqrt{\log{n}})$.  
\end{corollary}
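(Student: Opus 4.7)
The plan is to observe that Corollary \ref{cor:subtree} is essentially an immediate specialization of Corollary \ref{cor:subgraph}. Since we have assumed (at the start of Section~4) that the catalog tree is binary without loss of generality, the tree is itself a degree-bounded catalog graph. A subtree $\pi$ is, in particular, a connected subgraph of this catalog graph, so invoking Corollary \ref{cor:subgraph} on the tree, viewed as a graph, gives us exactly the statement we want: an $O(n)$-space data structure answering queries in $O(\log n + |\pi|\sqrt{\log n})$ time.

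For clarity, I would briefly unpack how the underlying machinery applies in this tree setting. First, apply Lemma \ref{lem:graphtrans} to the catalog tree $T$: this produces an auxiliary bounded-degree catalog graph $T'$ on $\Theta(|T|)$ vertices with total subdivision complexity $\Theta(n)$. By the lemma, any connected subquery $\pi \subseteq T$ (in particular, a subtree) can in $O(|\pi|)$ time be turned into a path $\pi'$ in $T'$ of length $O(|\pi|)$, constructed by following a DFS traversal of $\pi$ and redirecting repeated visits to distinct dummy copies, such that the answer to $(q,\pi)$ in $T$ equals the answer to $(q,\pi')$ in $T'$. Then apply Lemma \ref{thm:pathgraph} to $T'$, which yields a linear-space data structure (linear in the size of $T'$, hence $O(n)$) that answers the path query $(q, \pi')$ in time $O(\log n + |\pi'|\sqrt{\log n}) = O(\log n + |\pi|\sqrt{\log n})$.

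There is no real obstacle here; the only small thing to check is that the $O(|\pi|)$ time needed to build $\pi'$ from $\pi$ does not dominate, which is immediate since $|\pi|\sqrt{\log n}$ already appears in the query bound. In short, I expect the proof to consist of a single sentence invoking Corollary \ref{cor:subgraph}, optionally followed by the above one-paragraph recap for the reader's convenience.
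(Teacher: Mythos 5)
Your proposal is correct and matches the paper's own (implicit) argument: the paper derives \autoref{cor:subtree} exactly by combining \autoref{lem:graphtrans} with \autoref{thm:pathgraph} (equivalently, by specializing \autoref{cor:subgraph} to a tree viewed as a bounded-degree catalog graph). Nothing further is needed.
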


\subsection{The Lower Bound}
\label{sec:pathgraphlower}
In this section, we show that the $\sqrt{\log n}$ factor that exists in \autoref{thm:pathgraph}, \autoref{cor:subgraph}, and
\autoref{cor:subtree} is tight. 
Like the proof of path queries for catalog trees, we need a reduction from a
rectangle stabbing problem to a 2D OFC subtree query problem on catalog trees.
But unlike previous proofs, we use an instance of the rectangle stabbing problem in a much higher dimension.
\ignore{
  We first show a reduction from a 2D OFC subtree query problem on catalog trees
  to a 2D OFC path query problem on catalog graphs. 
  Assume we have a binary tree where we would like to answer a 2D OFC along a connected \textit{subgraph}.
  Given the binary tree, we
  transform it into a graph using the following method. For every node in the
  tree, we generate two dummy nodes (the white nodes in \autoref{fig:treetrans}),
  i.e., nodes with empty subdivisions. We use the dummy nodes to support traversing a
  subtree without visiting a node twice. The first dummy node will be used to
  support the returning from the left child and the second dummy node will be
  used to support the returning from the right child. We connect the original
  node (the black nodes in \autoref{fig:treetrans}) and use the second dummy node
  to return to the corresponding dummy node of its parent as shown in
  \autoref{fig:treetrans}. We also build the pointer relations between nodes as
  shown in \autoref{fig:pointerrelationship}. 
  To be more specific, for each vertex $v_i$, we add two 
  we generate two dummy nodes $a_i$ and $b_i$. We create
  pointers from $v_i$ to $a_i$, $a_i$ to $b_i$, $a_i$ to the right child of $v_i$ and $b_i$ to the $a_j$ where
  $v_j$ is the parent of $v_i$. 

  We prove the following property of the transformed graph.

  \begin{figure*}[h]
      \centering
      \begin{subfigure}[t]{0.5\textwidth}
          \centering
          \includegraphics[height=2.2in]{../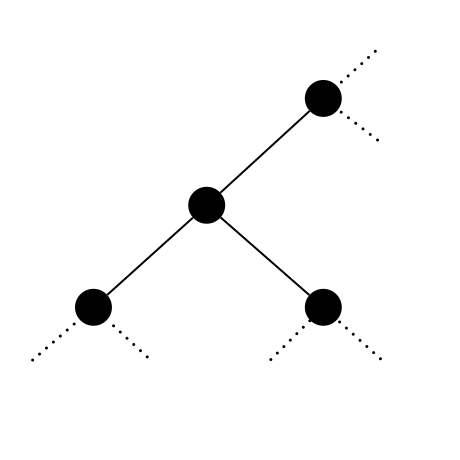}
          \caption{Before Transformation}
      \end{subfigure}%
      ~ 
      \begin{subfigure}[t]{0.5\textwidth}
          \centering
          \includegraphics[height=2.2in]{../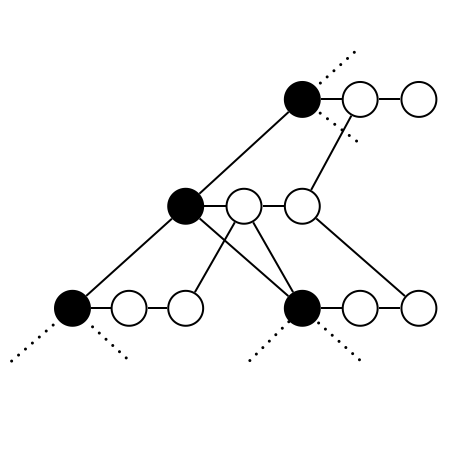}
          \caption{After Transformation}
      \end{subfigure}
      \caption{Transform a Tree into a Graph}
      \label{fig:treetrans}
  \end{figure*}

  \begin{figure}[h]
      \centering
      \includegraphics[width=0.5\textwidth]{../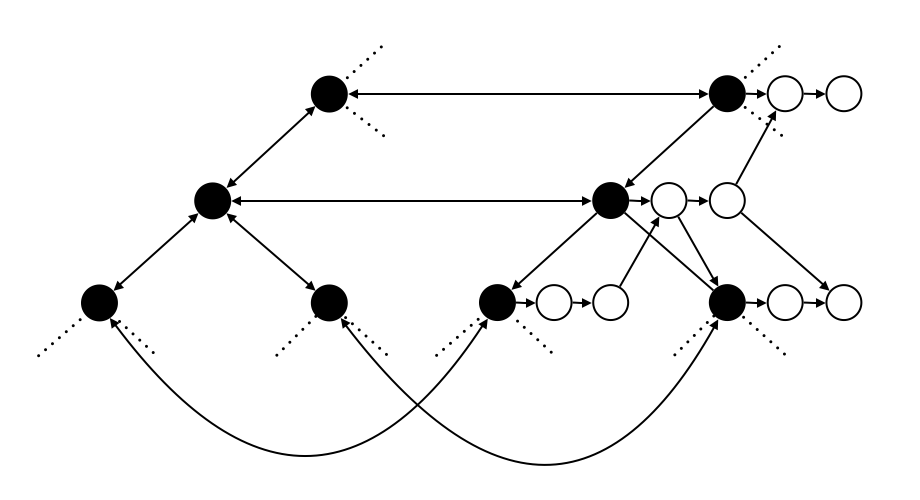}
      \caption{Pointer Relationships between Catalog Tree and Catalog Graph}
      \label{fig:pointerrelationship}
  \end{figure}

  \begin{lemma}
  \label{lem:treegraphprop}
  Given a catalog tree $T$ of $m$ nodes with subdivision complexity $n$, we can generate a catalog graph $G$ of $\Theta(m)$ vertices with subdivision complexity $\Theta(n)$ such that for any subtree query $Q_1=(q, \pi)$ where $\pi$ is a subtree of $T$, we can find a path query $Q_2=(q, \pi')$ on $G$ where $\pi'$ is a path in $G$ satisfying the answer to $Q_1$ equals the answer to $Q_2$. Furthermore, $|\pi'|=\Theta(|\pi|)$ and we can find $Q_2$ in time $O(|\pi|)$.
  \end{lemma}

  \begin{proof}
  We transform every node of $T$ to a set of three nodes and connect them as shown in \autoref{fig:treetrans}. By our construction, any black node will connect to at most one white node while any white node will connect to at most two black nodes and two white nodes. So the degree of any node is no more than four. We add two white nodes to each black nodes, so the size of $G$ graph is $\Theta(m)$

  Now suppose we are given a subtree query $Q_1=(q,\pi)$ and $R_1$ is the answer to this query. By definition, $R_1$ is the set of regions in the subdivisions of all nodes in $\pi$ that contain $q$.

  We generate a path $\pi'$ in $G$ using the following approach. 
  We first mark all the nodes in $G$ that corresponds to nodes in $\pi$ using the pointer in \autoref{fig:pointerrelationship}. 
  Then we start at the root of $\pi$ in $G$ and for every node we meet during this process, we recursively doing the following: 
  We first add this node to $\pi'$.
  We check its left child, if it is not null and marked, we go to the left child and recurse. 
  Otherwise, we go to the first white dummy node directly and add it to $\pi'$. 
  Next, we test the right child, if it is not null and marked, we go to the right child and recurse. 
  Otherwise, we go to the second white dummy node directly and add it to $\pi'$. 
  Finally, we go to the first white dummy node of its parent and add the dummy node to $\pi'$. 

  In this process, we recurse at a node when it has a marked child. 
  So the total number of recursive steps we take is equal to the number of marked nodes.
  At each recursive step, we take only a constant extra time to do pointer traverse and add nodes to $\pi'$.
  So the total running time is bounded by the number of marked nodes, i.e., $O(|\pi'|)$.

  We now show $\pi'$ is a path and contains exactly nodes in $\pi$ and the associated white dummy nodes. 
  We induct on the depth of $T$. Consider the case when $T$ is only one node. 
  Then we add that node to $\pi'$ and then directly go to the first and the second dummy nodes and add them to $\pi'$ as well.
  This $\pi'$ is a path and contains the only node and its dummy nodes. 
  Suppose this property holds for $T$ of depth $d$. Now consider a $T'$ of depth $d+1$.
  Let $n_1, n_2, \cdots, n_m$ be the nodes in $T'$ at depth $d$.
  If for some $n_i$ it has no child or none of the child is marked, 
  we simply output $n_1$ and the two dummy nodes in order.
  By the inductive hypothesis, this forms a subpath in the path
  generated for a tree of height $d$.
  If for some $n_i$ it has a marked child, we process as follows. 
  Let us assume it is the left child. The argument for the right child is similar. We recurse to that child.
  Let $w_1$, $w_2$ be the two dummy nodes associated with a node.
  Since its child has no marked child, we will add nodes to $\pi'$ in the order of 
  $n_i$, $n_i.$lchild, $n_i.$lchild.$w_1$, $n_i.$lchild.$w_2$, $n_i.w_1$.
  By the inductive hypothesis, $n_i$ and $n_i.w_i$ are two consecutive nodes in a path.
  So adding the nodes in this order will again form a path.
  Also, we add only marked nodes and its associated dummy nodes.

  Since $\pi'$ contains all nodes in $\pi$ and dummy nodes, the answer to $\pi$ and $\pi'$ are the same.
  \end{proof}
}
We show the lower bound for subtree queries of a catalog tree. 
By \autoref{lem:graphtrans}, this also gives a lower bound for 
path queries in general catalog graphs.

\begin{lemma}
\label{thm:treelowerbound}
Assume, given any catalog tree of height $\sqrt{\log n} \le h\le \frac{\log n}{2}$ 
in which each vertex is associated with a planar subdivision
  with $n$ being the total complexity of the subdivisions, 
  we can build a data structure that satisfies the following:
  it uses at most $n 2^{\varepsilon \sqrt{\log n}}$ space, for a small enough constant $\varepsilon$, and 
  it can answer 2D OFC queries $(q,\pi)$, where $q$ is a query point and 
  $\pi$ is a subtree containing $b=2^{h/2}$ leaves. 
  Then, its query time must be $\Omega(|\pi|\sqrt{\log{n}})$.
\end{lemma}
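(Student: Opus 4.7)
The plan is to extend the construction of \autoref{lem:shortpathtreelower} by using $b=2^{h/2}$ independent copies of its path hard instance, one per depth-$h/2$ subtree of a balanced binary catalog tree of height $h$, and then applying \autoref{thm:framework} in a higher-dimensional query space. First I would build the catalog tree: the top $h/2$ levels receive trivial one-face subdivisions, and on each of the $b$ depth-$h/2$ subtrees I would independently install a \emph{deep-only} variant of \autoref{lem:shortpathtreelower} with $n/b$ rectangles and parameter $r=\sqrt{\log n}/4$, keeping only the rectangles whose $z$-depth satisfies $D\le 1/2^r$ and replacing the remaining $O(r)$ shallow levels by trivial subdivisions. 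Each 3D rectangle projects to a 2D axis-aligned rectangle placed in the subdivision of its tree vertex, and the total subdivision complexity stays $\Theta(n)$.

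Next, I would restrict to subtree queries $\pi$ containing all top $h/2$ vertices plus one root-to-leaf path per depth-$h/2$ subtree, so $|\pi|=\Theta(bh)$, and parameterize such $\pi$ by a function $f\colon[b]\to[b]$ selecting one leaf per subtree. For the framework I would take $R$ to consist of the deep rectangles only, use query space $U=[0,1]^2\times[b]^b$ equipped with the product of the 2D Lebesgue measure and the uniform distribution over $f$, and lift each deep rectangle of subtree $i$ to the subset of $U$ that constrains only the coordinates $(q_x,q_y,f(i))$ through its 3D box. Under the hypothesis $h\ge\sqrt{\log n}$, which implies $h\ge 4r$, the number of deep levels per subtree is at least $h/2-r\ge h/4$, so every query satisfies $|R_q|=\Theta(bh)=\Theta(|\pi|)$.

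The hard part is verifying the pairwise intersection bound $v=O(V/2^r)$ with $V=hb/(2n)$. Same-subtree pairs inherit the $V/2^r$ bound from \autoref{lem:shortpathtreelower} directly. For pairs with $r_1$ in subtree $i_1$ and $r_2$ in subtree $i_2$ with $i_1\neq i_2$, independence of $f(i_1)$ and $f(i_2)$ lets the intersection measure decompose as $A_{12}\cdot D_1\cdot D_2$, where $A_{12}$ is the overlap area of the 2D projections and $D_i$ is the $z$-depth. Using $A_{12}\le V/\max(D_1,D_2)$ and the deepness restriction $\min(D_1,D_2)\le 1/2^r$ gives $A_{12}\cdot D_1\cdot D_2\le V\cdot\min(D_1,D_2)\le V/2^r$, matching the same-subtree bound. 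Plugging $t=\Theta(|\pi|)$ and $v=O(V/2^r)$ into \autoref{thm:framework} yields $S(n)=\Omega(n\cdot 2^r/2^{O(\gamma)})$, and the hypothesis $S(n)\le n\cdot 2^{\varepsilon\sqrt{\log n}}$ forces $\gamma=\Omega(\sqrt{\log n})$ for sufficiently small $\varepsilon$, whence $Q(n)\ge\gamma t=\Omega(|\pi|\sqrt{\log n})$ as required.
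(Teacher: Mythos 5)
Your proposal is correct, but it takes a genuinely different route from the paper's proof of \autoref{thm:treelowerbound}. The paper builds a single special $(2+b)$-dimensional rectangle stabbing instance on one tree of height $h$: each shape is thin in exactly one of the last $b$ coordinates, the $b$ root-to-leaf query paths are selected by those $b$ coordinates inside the \emph{same} tree (so they overlap in the upper levels and pass through all $h'=h-r$ nontrivial layers), the measure is Lebesgue on $[0,1]^{2+b}$, and the pairwise bound $v=V/2^r$ is verified by an explicit case analysis of the two possible cross-shape intersection expressions. You instead hang $b$ vertex-disjoint height-$h/2$ subtrees below a trivial top, install an independent copy of the 3D instance of \autoref{lem:shortpathtreelower} in each, work in the product space $[0,1]^2\times[b]^b$ with a product measure, and—this is your key extra idea—truncate each copy to ``deep'' shapes so that for rectangles in different subtrees the factorization $A_{12}\cdot D_1\cdot D_2\le V\min(D_1,D_2)\le V/2^r$ matches the intra-subtree bound; this truncation is precisely what replaces the paper's cross-shape volume computation, and without it two near-full-depth rectangles in different subtrees would only give $v\approx V$, which is too weak. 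Both routes then invoke \autoref{thm:framework} with the same parameters ($r=\Theta(\sqrt{\log n})$, $K=2^r$, $\gamma=\delta\sqrt{\log n}$). Your version is more modular, reusing \autoref{lem:shortpathtreelower} almost as a black box at the cost of wasting the trivial top and $O(r)$ shallow levels per subtree—harmless, since $h\ge 4r$ keeps $t=b(h/2-r)=\Theta(|\pi|)$—while the paper's single-tree construction keeps every queried vertex nontrivial; the resulting bounds are identical. Two small points to tighten: note explicitly that the framework's proof is measure-agnostic, so the mixed Lebesgue/uniform product measure is legitimate (or just make the last $b$ coordinates continuous, as the paper does), and the closing step should be phrased as the paper's contrapositive—fix $\gamma=\delta\sqrt{\log n}$; if condition (i) held, the derived space bound would contradict the assumed $n2^{\varepsilon\sqrt{\log n}}$ space, hence $Q(n)>\gamma t=\Omega(|\pi|\sqrt{\log n})$—rather than saying the space hypothesis ``forces $\gamma=\Omega(\sqrt{\log n})$''.
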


\begin{proof}

We define the following special $(2+b)$-dimensional rectangle stabbing problem. 
The input consists of $n$ rectangles in $(2+b)$ dimensions.
According to their shapes, rectangles are divided into $h'=h-r$ sets of size $n/h'$ each
where $r$ is a parameter to be determined later. 
The rectangles in each set are further divided into $b$ groups of size $n/(h'b)$ each.
All the rectangles in the same group are pairwise disjoint and they together tile the $(2+b)$-dimensional unit cube.
We put restrictions on the shapes of the input rectangles to make this problem special.
For a rectangle in set $i$, $i=0,1,\cdots,h'-1$, group $j$, $j=0,1,\cdots,b-1$,
except for the first two and the $(2+j)$-th dimensions,
its other side lengths are all set to be $1$.
The side length of the $(2+j)$-th dimension is set to be $1/2^{i+r}$.
We put restriction on the side lengths of the first two dimensions in set $i$ as follows:
For the first group $j=0$ in this set,
we put no restrictions of the first two dimensions as long as
they tile the unit cube and the total number of rectangles used for this group is $n/(h'b)$.
For an arbitrary group $j$,
we cut the range of the unit cube in the $(2+j)$-th dimension into $2^{i+r}$ equal length pieces.
This partitions the unit cube into $2^{i+r}$ parts.
Note that each part of the unit cube is also tiled by rectangles
since we require the side length of the $(2+j)$-th dimension of the rectangles to be $1/2^{i+r}$.
If we project the rectangles in each part of the unit cube into the first two dimensions,
we obtain $2^{i+r}$ axis-aligned planar subdivisions.
The planar subdivisions we generated for the first group is used
as a blueprint for the shape of rectangles in other groups.
More specifically, for the remaining groups in set $i$,
we require that the choices of the first two dimensions give
the same set of $2^{i+r}$ planar subdivisions as the first group.
The problem is as follows: Given a point in $(2+b)$-dimensions,
find all the rectangles containing this query point.

We now describe a reduction from this problem to a 2D OFC subtree query problem on catalog trees. 
We consider a complete balanced binary tree of height $h=h'+r$.
Note that the number of nodes at layer $i+r$ of the tree is the same as the number of different subdivisions 
we get by projecting a group in set $i$ to the first two dimensions.
Since we require that all the groups in the same set yield the same set of
subdivisions, we can simply attach the subdivisions to the nodes starting from layer $r$.
For nodes in layer smaller than $r$, we attach them with empty subdivisions.
Now let us analyze a rectangle stabbing query $q$ on the rectangle stabbing problem.
Consider rectangles in set $i$, we need to find the rectangle containing $q$ in each of the $b$ groups.
In this special rectangle stabbing problem,
to find the rectangle in group $j$ containing $q$, 
we can find the rectangle by first using the $(2+j)$-th coordinate of $q$ to
find the part of the unit cube where $q$ is in,
and then finding the output rectangle by a simple planar point location
on the projection of the part using the first two coordinates of $q$.
By our construction this is equivalent to choosing a node in layer $i+r$ of the binary tree
and performing a point location query on the subdivision attached to it.
Note that the node in layer $i+r+1$ we choose must be one of the children of the chosen node in layer $i+r$.
So if we only focus on one specific group $j$ of all sets,
the rectangle stabbing query corresponds to a series of point location queries
from the root to a leaf in the binary tree we constructed.
Similarly, we obtain $b$ such paths if we consider all groups and they together form a subtree of $b$ leaves.
The answer to the point location queries along the subtree gives the answer to the rectangle stabbing problem.

We describe a hard high dimensional rectangle stabbing problem instance. 
As before, we create rectangles of different shapes to tile the unit cube. 
But this time, we will consider a $(2+b)$-dimensional rectangle stabbing problem. 
For rectangles in class $i=0,\cdots,h'/r-1$ , supercluster $j=0,\cdots,r-1$, 
we create the following shapes:

\begin{align*}
\begin{rcases}
\lbrack\frac{1}{K^j}\times{K^j}\cdot{2^{ir+j+r}}\cdot{V}&\times{\frac{1}{2^{ir+j+r}}}\times{1}\times{1}\times\cdots\times{1}\rbrack\\
\lbrack\frac{1}{K^j}\times{K^j}\cdot{2^{ir+j+r}}\cdot{V}&\times{1}\times{\frac{1}{2^{ir+j+r}}}\times{1}\times\cdots\times{1}\rbrack\\
&\vdots\\
\lbrack\frac{1}{K^j}\times{K^j}\cdot{2^{ir+j+r}}\cdot{V}&\times{1}\times{1}\times\cdots\times{1}\times{\frac{1}{2^{ir+j+r}}}\rbrack\\
\end{rcases}
b\text{ shapes}
\end{align*}

where $K, V$ are parameters to be determined later. Note that all the rectangles are in $(2+b)$ dimensions.

We use each of the shape to tile a unit cube. Since the volume of any rectangle is $V$, we need $1/V$ rectangles of the same shape to tile the cube. We call it a cluster. Note that the rectangles in the same cluster are pairwise disjoint. We generated $h'b/V$ rectangles in total. By setting $V=h'b/n$, the total number of rectangles is $n$. Note that any point in the unit cube is contained in exactly $t=h'b$ rectangles.

Now we shall analyze the volume of the intersection between any two $(2+b)$-dimensional rectangles. 
Note that if two rectangles have the same side lengths for $b-1$ out of the last $b$ dimensions, then it is the case we have analyzed in the proof of, e.g., \autoref{lem:shortpathtreelower}, and the volume of the intersection of any two rectangles is bounded by $V/K$ if we set $K=2^r$. Now we analyze the other case. By our construction, two rectangles can only have at most two different side lengths in the last $b$ dimensions. We consider two rectangles in class $i_1$ supercluster $j_1$, and class $i_2$ supercluster $j_2$ respectively. W.l.o.g., we assume $j_1\geq{j_2}$. The case for $j_1\leq{j_2}$ is symmetric. Then there are two possible expressions for the intersection volume depending on the values of $i_1$ and $i_2$. The first one is
$$
\frac{1}{K^{j_1}}\times{K^{j_1}}\cdot{2^{i_1r+j_1+r}}\cdot{V}\times{\frac{1}{2^{i_1r+j_1+r}}}\times{\frac{1}{2^{i_2r+j_2+r}}}=\frac{V}{2^{i_2r+j_2+r}}\leq{\frac{V}{K}}.
$$
The second possible expression is
$$
\frac{1}{K^{j_1}}\times{K^{j_2}}\cdot{2^{i_2r+j_2+r}}\cdot{V}\times{\frac{1}{2^{i_1r+j_1+r}}}\times{\frac{1}{2^{i_2r+j_2+r}}}=\frac{V}{K^{j_1-j_2}}\times{\frac{1}{2^{i_1r+j_1+r}}}\leq\frac{V}{K}.
$$
The last inequality holds because $j_1\geq{j_2}$.

To make this construction well-defined, no side length of the rectangles can exceed 1. 
The largest side length can only be obtained in the second  dimension when $i=h'/r-1$ and $j=r-1$. We must have
\begin{align*}
  K^{r-1}2^{h'+r-1} V \le 1.
\end{align*}
By plugging in the values $V=h'b/n$ and $K=2^r$ we get that we must have
\begin{align}
  2^{r^2-r}2^{h'+r-1} h'b \le n\label{eq:con5.1}
\end{align}
Since by our assumptions  $h' \le \frac{\log n}{2}$, $b \le 2^{h/2} \le 2^{\log n / 4}$, 
it follows that by setting $r= \frac{\sqrt{\log n}}{4}$, the inequality (\ref{eq:con5.1}) holds.

If $\gamma h'b \ge Q(n)$ holds, then the first condition of \autoref{thm:framework} is satisfied and we obtain the lower bound of
$$
S(n)=\Omega\left(\frac{tv^{-1}}{2^{O(\gamma)}}\right)=\Omega\left(\frac{n2^r}{2^{O(\gamma)}}\right).\label{eq:con5.2}
$$
Now if we set $\gamma=\delta\sqrt{\log{n}}$ for a sufficiently small $\delta>0$, the data structure must use $\Omega(n2^{\Omega(\sqrt{\log{n}})})$ space, which contradicts the space usage stated in our lemma. 
Note that $|\pi|=\Theta((h'+r)b)=\Theta(h'b)$.
Then  $Q(n) \ge \gamma{h'b}=\Omega(|\pi|\sqrt{\log{n}})$.
\end{proof}

\begin{remark}
Note that the lower bound holds even when the query path is of length $\ge \sqrt{\log n}$ and $\le\frac{\log{n}}{2}$. We have already established this lower bound in \autoref{lem:shortpathtreelower}.
\end{remark}

Combining \autoref{lem:graphtrans} and \autoref{thm:treelowerbound}, we immediately have the following corollary:

\begin{corollary}
\label{cor:pathgraphlower}
Assume, given any degree-bounded catalog graph in which each vertex is associated with a planar subdivision
  with $n$ being the total complexity of the subdivisions, we can build a data structure that satisfies the following:
  it uses at most $n 2^{\varepsilon \sqrt{\log n}}$ space, for a small enough constant $\varepsilon$, and 
  it can answer 2D OFC queries $(q,\pi)$, where $q$ is a query point and $\pi$ is a path. 
  Then, its query time must be $\Omega(|\pi|\sqrt{\log{n}})$.
\end{corollary}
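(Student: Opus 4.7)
The plan is to derive this corollary as a quick consequence of the two previously established results, \autoref{lem:graphtrans} and \autoref{thm:treelowerbound}, via a contrapositive argument. Suppose for contradiction that a degree-bounded catalog graph data structure achieved space $n\cdot 2^{\varepsilon\sqrt{\log n}}$ and query time $o(|\pi|\sqrt{\log n})$ for path queries. I would compose it with the reduction of \autoref{lem:graphtrans} in reverse to obtain a subtree-query data structure on catalog trees with essentially the same parameters, directly contradicting \autoref{thm:treelowerbound}.

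Concretely, I would begin with any catalog tree $T$ from the hard family used in the proof of \autoref{thm:treelowerbound}: binary, of height between $\sqrt{\log n}$ and $(\log n)/2$, with total subdivision complexity $n$ and subtree queries containing $b=2^{h/2}$ leaves. Because $T$ has constant maximum degree, applying \autoref{lem:graphtrans} yields a degree-bounded catalog graph $G'$ with $\Theta(n)$ vertices and graph subdivision complexity $\Theta(n)$, together with the guarantee that any connected subgraph $\pi \subseteq T$ can be turned in $O(|\pi|)$ time into a path $\pi'$ in $G'$ with $|\pi'|=\Theta(|\pi|)$ that yields the same query answer. Instantiating the hypothesized path-query data structure on $G'$ and prefixing it with this reduction therefore gives a subtree-query data structure on $T$ whose space is $\Theta(n)\cdot 2^{\varepsilon\sqrt{\log\Theta(n)}}$ and whose query time is $T_{\text{path}}(|\pi'|,\Theta(n))+O(|\pi|)$.

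The final step is to absorb the constant-factor blow-up introduced by \autoref{lem:graphtrans} into a slightly larger constant $\varepsilon'$ in the exponent; this is routine because $\sqrt{\log(cn)} = \sqrt{\log n}+o(1)$ for any fixed $c$, so the resulting subtree-query data structure still falls within the hypothesis of \autoref{thm:treelowerbound} (provided the original $\varepsilon$ is chosen small enough). That lemma then forces a query lower bound of $\Omega(|\pi|\sqrt{\log n})$ for this composed data structure. The additive $O(|\pi|)$ cost of the reduction is dominated, so the path-query time on $G'$ must itself be $\Omega(|\pi'|\sqrt{\log n})$, which is the desired bound.

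There is no genuine obstacle here beyond careful bookkeeping of the constants traversed by the reduction: the $\Theta(\cdot)$ blow-up in the number of vertices, the $\Theta(\cdot)$ blow-up in the total subdivision complexity, and the $\Theta(\cdot)$ blow-up in path length must all be shown to leave the space and query-time hypotheses intact up to renaming constants. Because the relevant quantities depend on $n$ only through $\sqrt{\log n}$, all these constant factors are harmless, which is exactly why the authors describe the corollary as immediate.
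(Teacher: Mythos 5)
Your proposal is correct and follows exactly the route the paper intends: the paper states the corollary as an immediate consequence of composing the reduction of \autoref{lem:graphtrans} with the subtree lower bound of \autoref{thm:treelowerbound}, which is precisely your contrapositive argument, with the constant-factor blow-ups in size, complexity, and path length absorbed into the choice of $\varepsilon$. Your bookkeeping of these constants (degree stays $O(1)$, $|\pi'|=\Theta(|\pi|)$, $\sqrt{\log \Theta(n)}=\Theta(\sqrt{\log n})$) matches what the paper leaves implicit.
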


\section{Open Problems}
For the linear space data structure we obtained for general path queries of trees \autoref{cor:pathtree}, there is a tiny inverse Ackermann gap between the query time we obtain and the lower bound. It is an interesting problem whether we can get rid of that term or improve the lower bound. 

The problem we consider is very general in the sense that the only restriction we place on the input instance is that the graph subdivision complexity is $n$. Some special cases admit better solutions. For example, if we require the subdivision complexity of each vertex of the graph to be asymptotically the same, we can obtain an $O(n)$ space and $O(\log n + |\pi|\log\log{n})$ query time data structure\footnote{This is done by creating an intersection sensitive $(\log{n}/n_i)$-cutting $C_i$ for each subdivision $A_i$ in the tree and then storing all cutting cells on each path using the data structure in \autoref{thm:path} and building point location data structures on the conflict list of each cell.}
 for path queries on catalog trees of height $h\le\frac{\log{n}}{2}$, which is better compared to the linear space $O(\log n+|\pi|\sqrt{\log n})$ query time data structure we obtained in the general case \autoref{lem:shorttreepath}.

Higher dimensional generalization of our results is another direction. In 2D, we can transform an axis-aligned planar subdivision to a subdivision consisting of only rectangles by increasing the subdivision complexity by only a constant factor; however it is not the case for 3D. On the other hand, for 3D point locations on orthogonal subdivisions, we have Rahul's $O(\log^{3/2}n)$ query time and linear space data structure \cite{rahul2014improved} in the standard pointer machine model. Recently, the query time is improved to $O(\log{n})$ by Chan et al. \cite{chan2018orthogonal}, but they use a stronger arithmetic pointer machine model. Given that the higher dimensional counterparts of the tools we use for 2D are suboptimal, it is a challenging and interesting problem to see how the results will be in higher dimensions.

Other open problems include considering the dynamization of our results, i.e., to support insertion and deletion, and other computational models, e.g., RAM and I/O model.

\bibliography{reference}{}
\bibliographystyle{abbrv}

\begin{appendices}
\section{Proof of \autoref{thm:framework}}
\label{sec:frameworkproof}
\thmframework*
\ignore{
To prove this theorem, we use a variant of the pointer machine model \cite{afshani2012improved}. We give a short description of this model. In this model, the data structure is modelled by an out-degree bounded directed graph. Each vertex in the graph stores an input element and two pointers pointing to other vertices. To answer a query, an algorithm starts at a special vertex, called the root, and explores a subgraph which contains all the elements need to be reported by pointer navigation. Note that random access is disallowed in this model, but other information can be stored or accessed for free. The query time is defined to be the number of vertices of the explored subgraph, and the space is defined to be the number of vertices of the whole directed graph.
}
To prove this theorem, we first show a special property of the subgraph $M_q$ explored to answer a query $q\in{U}$. Note that in the pointer machine model, we begin the exploration with a special cell, called the root. If we consider only the first in-edge to any cell in $M_q$, we obtain a tree.

\begin{lemma}
\label{lem:forkdecomp}
Let $M_q$ be the explored subgraph corresponds to a query $q\in{U}$. We call the memory cells in $M_q$ containing reported ranges marked cells. Let a fork be a subtree of $M_q$ of size at most $c\gamma$ containing two marked cells, where $c$ is a large enough constant and $\gamma$ is the parameter in \autoref{thm:framework}. Then $M_q$ can be decomposed into $\Omega(|R_q|)$ many forks, where $R_q$ is the set of ranges containing $q$.
\end{lemma}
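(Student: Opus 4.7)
The plan is to view $M_q$ (after restricting to first in-edges, as the lemma stipulates) as a rooted tree in which every node has out-degree at most two, since each memory cell of the pointer machine stores only two pointers. By condition (i) of \autoref{thm:framework}, the query time is at most $Q(n)+\gamma|R_q|\le 2\gamma|R_q|$, and since each explored cell costs at least one time unit, $|M_q|=O(\gamma|R_q|)$; call the hidden constant $K$.

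Next I would carry out a greedy post-order decomposition. For each vertex $v$, I maintain an ``uncommitted'' subtree $T_v$ rooted at $v$ with two invariants: $T_v$ has at most one marked cell and at most $c\gamma/2$ vertices, for a constant $c$ to be fixed later. When $v$ is processed, form the candidate $T_v:=\{v\}\cup T_{u_1}\cup T_{u_2}$ from its (at most two) children's passed-up subtrees. The candidate then has at most three marked cells (at most one from each child's uncommitted subtree, plus possibly $v$ itself) and at most $c\gamma+1$ vertices. If it contains at least two marked cells, detach it as a fork; otherwise, if it has more than $c\gamma/2$ vertices, detach it as a ``wasted'' piece; otherwise pass it up. After the root is processed, one final residual with at most one marked cell may remain.

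For the counting, let $F$ be the number of forks and $W$ the number of wasted pieces produced by the decomposition. Every wasted piece has size greater than $c\gamma/2$, so $Wc\gamma/2\le|M_q|\le K\gamma|R_q|$, giving $W\le 2K|R_q|/c$. Each fork contains at most three marked cells, each wasted piece at most one, and the final residual at most one, so
\[
|R_q|\le 3F+W+1\le 3F+\frac{2K}{c}|R_q|+1.
\]
Choosing $c$ large enough (for instance $c\ge 24K$) then yields $F=\Omega(|R_q|)$, as claimed.

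The main obstacle I expect is the bookkeeping around constants---in particular verifying that the candidate-size bound $c\gamma+1$ can be massaged into the required ``size at most $c\gamma$'' (e.g.\ by tightening the invariant to $c\gamma/2-1$), and confirming that the out-degree-two property of the first-in-edge tree really does cap the marked-cell count per candidate at three so that the $3F$ coefficient is correct. Once those are pinned down, everything else is a routine accounting argument.
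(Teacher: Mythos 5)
Your proposal is correct and follows essentially the same route as the paper: both decompose the tree $M_q$ bottom-up into disjoint connected pieces of size $O(\gamma)$ containing two marked cells, and both charge the wasted marks against the bound $|M_q|\le Q(n)+\gamma|R_q|\le 2\gamma|R_q|$ coming from condition (i) of \autoref{thm:framework}. The only differences are cosmetic---you detach oversized one-marked pieces eagerly (so a fork may carry up to three marks and size $c\gamma+1$), whereas the paper discards a pair of marks whenever a two-marked subtree has grown beyond $c\gamma$---and these affect only the constants, which you already indicate how to fix.
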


\begin{proof}
The proof we present is very similar to the one described in \cite{afshani2012improved}. 
We generate the forks using the following method. 
For every cell in $M_q$, we assign two values mark and size to it. For marked cell, we initialize its mark value to be one. 
Other cells will have mark value zero. For any cell in $M_q$, we assign one to its size value. 
Without loss of generality, we assume $M_q$ to be a tree. 
At every step, we choose an arbitrary leaf and add its mark value and size value to the corresponding values of its parent. 
Then we remove this cell. If its parent has another child, we repeat this process until its parent becomes a leaf. 
If after this process its parent has mark value two and size value more than $c\gamma$, 
we remove its parent as well and do nothing. We call this situation ``wasted''. 
If its parent has mark value two and size value no more than $c\gamma$, we find a fork. 
We add it to the fork set and remove the subtree. If its parent has mark value less than two, we do nothing. 
Note that its parent cannot have mark value more than two 
because that will indicate one of its child has mark value at least two but not being added to a fork or wasted. 
We go on to the next step until reaching the root. 

Let us consider how many marks will be wasted. 
We only waste marks when we find a subtree containing two marks but of size more than $c\gamma$ 
and when we reach the root with only one mark. 
Since $M_q$ contains $Q(n) +\gamma|R_q|\le2\gamma|R_q|$ cells, 
the number of marks wasted is bounded by $4\gamma|R_q|/(c\gamma)+1=4|R_q|/c+1$. 
Other marks are all stored in forks, 
so the number of forks is more than $(|R_q|-4|R_q|/c-1)/2=\Omega(|R_q|)$ for a sufficiently large $c$.
\end{proof}

We also need another lemma, which follows directed from Lemma 1 in Afshani \cite{afshani2012improved}.

\begin{lemma}
\label{lem:forknum}
The number of forks of size $O(\gamma)$ is $O(S(n)2^{O(\gamma)})$.
\end{lemma}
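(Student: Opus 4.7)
The plan is to bound the number of forks by identifying each fork with a rooted connected subgraph of the underlying memory graph $M$, and then exploiting the fact that $M$ has bounded out-degree (each cell stores only two pointers) to count such subgraphs with a Catalan-style estimate.

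First, I would argue that any fork can be uniquely encoded as a pair $(v, \tau)$, where $v$ is a cell of $M$ (the ``root'' of the fork) and $\tau$ is a rooted subtree of size at most $c\gamma$ grown from $v$ by following out-pointers of $M$. Concretely, for any fork $F$ that appears inside some $M_q$, the convention of \autoref{lem:forkdecomp} that we only keep the first in-edge to every cell turns $M_q$ into a tree; within this tree $F$ is a subtree, so it has a well-defined top cell $v$ and a canonical rooted-tree shape inherited from the two labelled out-pointers of $v$ and its descendants in $M$. Two forks arising from different queries that share the same pair $(v, \tau)$ are literally the same subgraph of $M$, so this encoding counts each fork at most a constant number of times.

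Second, I would count the pairs $(v, \tau)$. Fix $v \in M$. Since $v$ and each of its descendants in $M$ have exactly two out-pointers (labelled ``left'' and ``right''), a rooted subtree of size $k$ grown from $v$ is determined by a rooted binary tree on $k$ nodes where each node specifies which of its two out-pointers are included. The number of such shapes is at most the Catalan number $C_k = \binom{2k}{k}/(k+1) = O(4^k)$. Summing $k=1,\dots,c\gamma$ gives at most $2^{O(\gamma)}$ shapes per root cell $v$. Since there are $S(n)$ choices for $v$, the total number of forks is $O(S(n)\,2^{O(\gamma)})$, which is the claimed bound.

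The main obstacle I anticipate is the first step, namely pinning down a unique encoding. One must be careful that a subtree of the first-in-edge tree of $M_q$ really does correspond to a subtree of $M$ rooted at a well-defined cell, and that the labelling of the two out-pointers of each cell in $M$ is fixed once and for all (independent of the query) so that the Catalan count is not inflated. Once this bookkeeping is in place, the counting argument is completely standard and the bound $O(S(n)\,2^{O(\gamma)})$ follows immediately.
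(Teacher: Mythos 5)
Your proposal is correct and is essentially the argument the paper relies on: the paper does not prove this lemma itself but cites Lemma~1 of Afshani~\cite{afshani2012improved}, whose proof is precisely this counting of rooted subtrees in a memory graph of out-degree two (root cell times a Catalan-type $2^{O(\gamma)}$ bound on shapes of size at most $c\gamma$). Your bookkeeping concerns are handled exactly as you suggest—the pointer labels are fixed in $M$ independently of the query, and every fork determines its top cell and shape—so the bound $O(S(n)2^{O(\gamma)})$ follows as you describe.
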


Now we prove \autoref{thm:framework}.

\ignore{
\begin{proof}[Proof of \autoref{thm:framework}]
Consider any query point $q\in{U}$. By definition, it is contained in a set $R_q$ of ranges.
Let $\rho_q$ be the intersection of all the ranges in $R_q$. We define its measure to be
$\mu(\rho_q)=\mu(\{p\in U | \cap_{r_i\in R_q} r_i \in R_p\})$.
Now consider the explored subgraph $M_q$ when answering $q$. 
By \autoref{lem:forkdecomp}, we can decompose $M_q$ into a set $F_q$ of $|R_q|$ forks
with each fork containing two output ranges. 
The intersection of two output ranges of any fork $f^q_j\in F_q$ must contain $\rho_q$
since they are to be reported.
We can then define the measure of a fork $f^q_i\in F_q$ to be $\mu(\rho_q)$ 
and the measure of $F_q$ to be 
$\mu(F_q)=\sum_{f^q_j\in F_q}\mu(f^q_j)=\Omega(|R_q|\mu(\rho_q))=\Omega(Q(n)\mu(\rho_q)/\gamma)$.
The last equality holds by condition (i).
Since every query $q\in U$ can be answered by this data structure, it follows the total measure of all forks is at least $Q(n)/\gamma$ times of $\mu(U)$, i.e.,
$$
\mu(\bigcup_{q\in U}F_q)=\Omega(\frac{Q(n)}{\gamma}\bigcup_{q\in U}\mu(\rho_q))=\Omega(\frac{Q(n)}{\gamma}).
$$

On the other hand, by \autoref{lem:forknum}, the total number of forks fo size $O(\gamma)$ 
is bounded by $S(n)2^{O(\gamma)}$ and there are ${{O(\gamma)}\choose{2}}=O(\gamma^2)$ 
ways for each fork to participate in query answering. 
By condition (ii), the measure of the intersection of any two ranges is upper bounded by $v$. 
We must have
$$
O(\gamma^2)S(n)2^{O(\gamma)}v=\Omega(\frac{Q(n)}{\gamma})\implies{S(n)=\Omega(\frac{Q(n)}{v2^{O(\gamma)}})}.
$$
\end{proof}
}

\begin{proof}
Consider any query point $q\in U$. By definition, it is contained in a set $R_q$ of ranges. 
Consider the explored subgraph $M_q$ when answering $q$. 
By \autoref{lem:forkdecomp}, we can decompose $M_q$ into a set $F_q$ of $\Omega(|R_q|)$ forks 
such that each fork contains two output ranges. 
Note that for the two ranges to be output, $q$ must lie in the intersection of the two ranges. 
Similarly, $q$ must lie in all the intersection of the two ranges for every fork in $M_q$. 
This implies that $q$ is covered by these intersections $\Omega(|R_q|)$ times.

Since we can answer queries for all $q\in U$ and by assumption (i) each $q$ is contained in $t$ ranges,
it implies that the intersections of two ranges in all possible forks cover $U$ $\Omega(t)$ times. 
By \autoref{lem:forknum}, the number of possible forks of size $O(\gamma)$ is $O(S(n)2^{O(\gamma)})$. 
Each fork has ${O(\gamma)\choose{2}}=O(\gamma^2)$ ways to choose two ranges. 
By assumption (ii), the measure of any two ranges is bounded by $v$.
So by a simple measure argument,
$$
O(\gamma^2)S(n)2^{O(\gamma)}v=\Omega(t).
$$
This gives us
$$
S(n)=\Omega(\frac{t}{v2^{O(\gamma)}}).
$$
By our assumption (i), $\gamma t\ge Q(n)$, we also obtain
$$
S(n)=\Omega(\frac{Q(n)}{v2^{O(\gamma)}}).
$$

\end{proof}

\end{appendices}

\end{document}